\theoremstyle{plain}
\newtheorem{thm}{\protect\theoremname}
\theoremstyle{plain}
\newtheorem{lyxalgorithm}[thm]{\protect\algorithmname}
\theoremstyle{plain}
\newtheorem{prop}[thm]{\protect\propositionname}
\DeclareMathOperator{\Tr}{Tr}
\numberwithin{equation}{section}
\providecommand{\algorithmname}{Algorithm}
\providecommand{\propositionname}{Proposition}
\providecommand{\theoremname}{Theorem}
\begin{document}
\title{Generative modeling using evolved quantum Boltzmann machines}
\author{Mark M. Wilde\\
\textit{School of Electrical and Computer Engineering, Cornell University,}\\
\textit{Ithaca, New York 14850, USA}}
\maketitle
\begin{abstract}
Born-rule generative modeling, a central task in quantum machine learning,
seeks to learn probability distributions that can be efficiently sampled
by measuring complex quantum states. One hope is for quantum models
to efficiently capture probability distributions that are difficult
to learn and simulate by classical means alone. Quantum Boltzmann
machines were proposed about one decade ago for this purpose, yet
efficient training methods have remained elusive. In this paper, I
overcome this obstacle by proposing a practical solution that trains
quantum Boltzmann machines for Born-rule generative modeling. Two
key ingredients in the proposal are the Donsker--Varadhan variational
representation of the classical relative entropy and the quantum Boltzmann
gradient estimator of {[}Patel \textit{et al}., arXiv:2410.12935{]}.
I present the main result for a more general ansatz known as an evolved
quantum Boltzmann machine {[}Minervini \textit{et al}., arXiv:2501.03367{]},
which combines parameterized real- and imaginary-time evolution. I
also show how to extend the findings to other distinguishability measures
beyond relative entropy. Finally, I present four different hybrid
quantum--classical algorithms for the minimax optimization underlying
training, and I discuss their theoretical convergence guarantees.
\end{abstract}
\tableofcontents{}

\section{Introduction}

\subsection{Background and motivation}

By now, quantum machine learning is a well established field of study,
wherein one of the goals is to determine distinctions between quantum
and classical computers for learning tasks \cite{Biamonte2017,Chang2025}.
There is a rich body of theoretical work including topics as diverse
as quantum learning theory \cite{Anshu2023,Caro2024}, variational
quantum algorithms for learning \cite{Cerezo2021}, quantum algorithms
for machine learning applications \cite{Wang2024,Liu2024}, etc. We
also see ideas from the more traditional area of quantum information
theory influencing and having an impact \cite{Cheng2016,Huang2021,Caro2024}.

A key task in quantum machine learning is known as Born-rule generative
modeling \cite{Benedetti2017,Kieferova2017,PerdomoOrtiz2018,Amin2018,Coyle2020,Sweke2021,Gao2022,Rudolph2024},
in which the goal is to simulate samples from a probability distribution
$p$ by means of a parameterized probability distribution $q_{\gamma}$
that results from measuring a parameterized quantum state $\rho_{\gamma}$.
In more detail, suppose that $p$ is a probability distribution over
a finite alphabet $\mathcal{Z}$. We assume that sample access to
$p$ is available, and the model probability distribution $q_{\gamma}$
is of the following form:
\begin{equation}
q_{\gamma}(z)\coloneqq\Tr\!\left[\Lambda_{z}\rho_{\gamma}\right],
\end{equation}
where $\left(\Lambda_{z}\right)_{z\in\mathcal{Z}}$ is a positive
operator-valued measure (POVM), which satisfies $\Lambda_{z}\geq0$
for all $z\in\mathcal{Z}$ and $\sum_{z\in\mathcal{Z}}\Lambda_{z}=I$.
Additionally, $\rho_{\gamma}$ is a state (density operator) parameterized
by $\gamma\in\Gamma\subseteq\mathbb{R}^{M}$, where $M\in\mathbb{N}$.
By receiving samples from $p$ and tuning the values in the parameter
vector $\gamma$, the aim is to minimize an information-theoretically
meaningful distinguishability measure between $p$ and $q_{\gamma}$,
such as the relative entropy \cite{Kullback1951}:
\begin{equation}
D(p\|q_{\gamma})\coloneqq\sum_{z\in\mathcal{Z}}p(z)\ln\!\left(\frac{p(z)}{q_{\gamma}(z)}\right).\label{eq:rel-ent-def}
\end{equation}
That is, the goal is to determine a minimizing parameter vector $\gamma$
in the following minimization task:
\begin{equation}
\inf_{\gamma\in\Gamma}D(p\|q_{\gamma}),\label{eq:min-task-rel-ent}
\end{equation}
by observing samples from $p$, generating samples from $q_{\gamma}$,
and tuning $\gamma$. Once a suitable $\gamma$ has been determined,
one can then generate samples from $q_{\gamma}$ at will by preparing
the state $\rho_{\gamma}$ on a quantum computer and measuring it
according to the POVM $\left(\Lambda_{z}\right)_{z\in\mathcal{Z}}$.
The relative entropy is a sensible distinguishability measure, being
well grounded in information theory with an operational meaning in
asymptotic hypothesis testing as the optimal type II error exponent
in Stein's lemma \cite{Stein1951,Chernoff1956,Cover2006}. Furthermore,
via the Pinsker inequality, it is an upper bound on the total variational
distance between $p$ and $q_{\gamma}$, the latter having information-theoretic
meaning in the single-shot setting of hypothesis testing \cite[Theorem~13.1.1]{Lehmann2005}.

Several quantum models have been proposed for Born-rule generative
modeling, including quantum circuit Born machines \cite{Liu2018,Benedetti2019},
quantum Boltzmann machines \cite{Amin2018,Benedetti2017,Kieferova2017},
and a generalization of the latter known as evolved quantum Boltzmann
machines \cite{Minervini2025}. However, for the last two models,
it has been unclear since the original proposals how to train them
efficiently for Born-rule generative modeling, and there has been
wide speculation in the quantum machine learning literature that this
would not be possible \cite{Amin2018,Wiebe2019,Anschuetz2019,Kappen2020,Kieferova2017,Zoufal2021}.
Most recently, a proposal for training quantum Boltzmann machines
for Born-rule generative modeling was made in \cite{Patel2024}, but
the assumptions made in doing so seem too restrictive to be widely
applicable.

Before proceeding, let us recall that a quantum Boltzmann machine
is a variational ansatz of the following form \cite{Amin2018,Benedetti2017,Kieferova2017}:
\begin{align}
\rho_{\theta} & \coloneqq\frac{e^{-G(\theta)}}{Z(\theta)},\label{eq:QBM-def}\\
G(\theta) & \coloneqq\sum_{j=1}^{J}\theta_{j}G_{j},\\
Z(\theta) & \coloneqq\Tr\!\left[e^{-G(\theta)}\right],
\end{align}
where $\theta_{j}\in\mathbb{R}$ and each $G_{j}$ is a Hamiltonian
for all $j\in\left[J\right]\coloneqq\left\{ 1,\ldots,J\right\} $.
Thus, $\rho_{\theta}$ is a parameterized thermal state of the Hamiltonian
$G(\theta)$. Generalizing this ansatz is an evolved quantum Boltzmann
machine \cite{Minervini2025}, having the following form:
\begin{align}
\omega_{\theta,\phi} & \coloneqq e^{-iH(\phi)}\rho_{\theta}e^{iH(\phi)},\label{eq:evolved-QBM}\\
H(\phi) & \coloneqq\sum_{k=1}^{K}\phi_{k}H_{k},
\end{align}
where $\phi_{k}\in\mathbb{R}$ and each $H_{k}$ is a Hamiltonian
for all $k\in\left[K\right]$. An evolved quantum Boltzmann machine
incorporates parameterized time evolution in addition to a parameterized
thermal state.

\subsection{Summary of main contribution}

By combining two key ingredients, I propose a practical solution to
the problem of training evolved quantum Boltzmann machines for Born-rule
generative modeling.

\begin{figure}
\centering{}\includegraphics[width=0.65\textwidth]{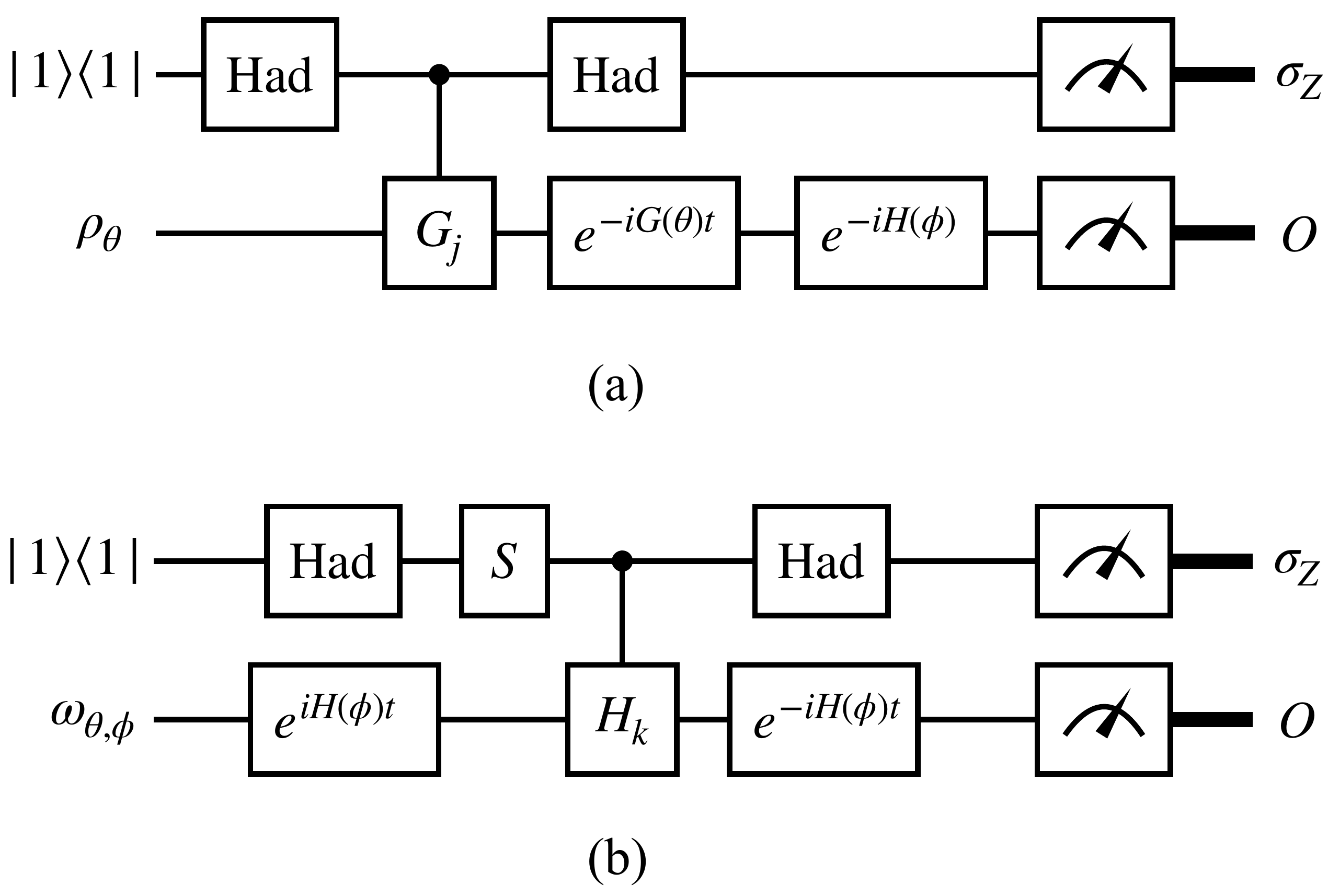}\caption{{\footnotesize (a) Quantum circuit for estimating the first term $-\frac{1}{2}\left\langle \left\{ O,\Phi_{\theta}(G_{j})\right\} \right\rangle _{\rho_{\theta}}$
in \eqref{eq:gradient-VQE-obj}. For each execution of this circuit,
a value $t\in\mathbb{R}$ is chosen randomly from the high-peak probability
density $p(t)$ in \eqref{eq:high-peak-tent-def}. (b) Quantum circuit
for estimating the quantity $-\frac{i}{2}\left\langle \left[O,\Psi_{\phi}(H_{k})\right]\right\rangle _{\omega_{\theta,\phi}}$
in \eqref{eq:time-evolve-grad}. For each execution of this circuit,
a value $t\in\mathbb{R}$ is chosen uniformly at random from the unit
interval $\left[0,1\right]$. The symbol ``Had'' denotes the Hadamard
gate, and the symbol $S$ denotes the phase gate $S\protect\coloneqq\begin{bmatrix}1 & 0\\
0 & i
\end{bmatrix}$.} }\label{fig:Quantum-Boltzmann-gradient-estimator}
\end{figure}

The first ingredient is one of the main contributions of \cite{Patel2024,Minervini2025},
known as the evolved quantum Boltzmann gradient estimator. Given an
observable $O$, the gradient of the function
\begin{equation}
\gamma\coloneqq\left(\theta,\phi\right)\mapsto\Tr\!\left[O\omega_{\theta,\phi}\right]
\end{equation}
is specified by the following partial derivatives \cite[Eqs.~(13)--(14)]{Minervini2025}:
\begin{align}
\frac{\partial}{\partial\theta_{j}}\Tr\!\left[O\omega_{\theta,\phi}\right] & =-\frac{1}{2}\left\langle \left\{ e^{iH(\phi)}Oe^{-iH(\phi)},\Phi_{\theta}(G_{j})\right\} \right\rangle _{\rho_{\theta}}+\left\langle O\right\rangle _{\omega_{\theta,\phi}}\left\langle G_{j}\right\rangle _{\rho_{\theta}},\label{eq:gradient-VQE-obj}\\
\frac{\partial}{\partial\phi_{k}}\Tr\!\left[O\omega_{\theta,\phi}\right] & =-i\left\langle \left[O,\Psi_{\phi}(H_{k})\right]\right\rangle _{\omega_{\theta,\phi}},\label{eq:time-evolve-grad}
\end{align}
where $\left\langle A\right\rangle _{\rho_{\theta}}\equiv\Tr\!\left[A\rho_{\theta}\right]$,
$\left\{ A,B\right\} \equiv AB+BA$, $\left[A,B\right]\equiv AB-BA$,
$\Phi_{\theta}$ and $\Psi_{\phi}$ are the following quantum channels:
\begin{align}
\Phi_{\theta}(X) & \coloneqq\int_{-\infty}^{\infty}dt\ p(t)\,e^{-iG(\theta)t}Xe^{iG(\theta)t},\\
\Psi_{\phi}(X) & \coloneqq\int_{0}^{1}dt\ e^{-iH(\phi)t}Xe^{iH(\phi)t},\\
p(t) & \coloneqq\frac{2}{\pi}\ln\left|\coth\!\left(\frac{\pi t}{2}\right)\right|,\label{eq:high-peak-tent-def}
\end{align}
and $p(t)$ is a probability density function known as the high-peak
tent probability density \cite{Patel2024}. The quantum Boltzmann
gradient estimator is a quantum algorithm that estimates the expression
in \eqref{eq:gradient-VQE-obj} for each $j\in\left[J\right]$. An
assumption of these gradient estimation algorithms is that one can
efficiently prepare the states $\omega_{\theta,\phi}$ and $\rho_{\theta}$,
which is justified by efficient quantum algorithms for Hamiltonian
simulation \cite{Lloyd1996,Childs2018} and thermal state preparation
\cite{Chen2025,Rouze2024,Ding2025}. The second term in \eqref{eq:gradient-VQE-obj}
is easy to estimate by repeatedly preparing the state $\omega_{\theta,\phi}\otimes\rho_{\theta}$
and measuring the observable $O\otimes G_{j}$, while the first term
can be estimated by the quantum circuit depicted in Figure \ref{fig:Quantum-Boltzmann-gradient-estimator}(a).
The quantum circuit depicted in Figure \ref{fig:Quantum-Boltzmann-gradient-estimator}(b)
estimates the partial derivative in \eqref{eq:time-evolve-grad}.
Altogether, the scheme for estimating all partial derivatives in \eqref{eq:gradient-VQE-obj}--\eqref{eq:time-evolve-grad}
is known as the evolved quantum Boltzmann gradient estimator. As given
in this form, both algorithms assume that $G_{j}$ and $H_{k}$ are
unitary in addition to being Hermitian; however, as stated in \cite{Patel2024,Minervini2025},
this assumption can be removed if block-encodings \cite{Low2019,Gilyen2019}
of $G_{j}$ and $H_{k}$ are available. Henceforth, for simplicity,
I make the assumption that $G_{j}$ and $H_{k}$ are both Hermitian
and unitary.

The second key ingredient is a theoretical device known as the Donsker--Varadhan
(DV) variational formula \cite[Lemma~2.1]{Donsker1975}, which has
been widely used and studied in the classical generative modeling
literature \cite{Nguyen2010,Nowozin2016,Belghazi2018,Sreekumar2022}.
In short, the DV formula provides the following variational representations
for the relative entropy in \eqref{eq:rel-ent-def}:
\begin{align}
D(p\|q_{\gamma}) & =\sup_{T\in\mathcal{T}}\left\{ \mathbb{E}_{p}\!\left[T(Z)\right]-\ln\mathbb{E}_{q_{\gamma}}\!\left[e^{T(Z)}\right]\right\} \\
 & =\sup_{T\in\mathcal{T}}\left\{ \mathbb{E}_{p}\!\left[T(Z)\right]+1-\mathbb{E}_{q_{\gamma}}\!\left[e^{T(Z)}\right]\right\} ,\label{eq:DV-formula}
\end{align}
where $\mathcal{T}$ consists of all functions from $\mathcal{Z}$
to $\mathbb{R}$ (i.e., $T\colon\mathcal{Z}\to\mathbb{R}$). Thus,
the desired optimization task in \eqref{eq:min-task-rel-ent} can
be reformulated as the following minimax optimization problem:
\begin{equation}
\inf_{\gamma\in\Gamma}D(p\|q_{\gamma})=\inf_{\gamma\in\Gamma}\sup_{T\in\mathcal{T}}\left\{ \mathbb{E}_{p}\!\left[T(Z)\right]+1-\mathbb{E}_{q_{\gamma}}\!\left[e^{T(Z)}\right]\right\} .\label{eq:min-max-opt-rel-ent}
\end{equation}
The main advantage of the formula in \eqref{eq:min-max-opt-rel-ent}
is that it rewrites the non-linear objective function in \eqref{eq:rel-ent-def}
in terms of the linear objective function in \eqref{eq:min-max-opt-rel-ent},
at the cost of an additional optimization. However, the payoff of
this rewriting is notable and useful: due to the objective function
being linear in $p$ and $q_{\gamma}$, terms like $\mathbb{E}_{p}\!\left[T(Z)\right]$
and $\mathbb{E}_{q_{\gamma}}\!\left[e^{T(Z)}\right]$ can be estimated
in an unbiased way by sampling from the probability distributions
$p$ and $q_{\gamma}$. Furthermore, the objective function in \eqref{eq:min-max-opt-rel-ent}
is concave in $T$, which is beneficial when considering schemes for
performing the minimax optimization in \eqref{eq:min-max-opt-rel-ent}.

Following earlier works \cite{Nguyen2010,Nowozin2016,Belghazi2018,Sreekumar2022},
instead of optimizing over every possible function $T\in\mathcal{T}$,
we can instead adopt a neural network parameterization of such functions
in terms of a parameter vector $w\in\mathcal{W}\subseteq\mathbb{R}^{L}$,
where $L\in\mathbb{N}$, leading to the following lower bound on $\inf_{\gamma\in\Gamma}D(p\|q_{\gamma})$:
\begin{align}
\inf_{\gamma\in\Gamma}D(p\|q_{\gamma}) & \geq\inf_{\gamma\in\Gamma}\sup_{w\in\mathcal{W}}f(\gamma,w),\label{eq:neural-net-param-div}\\
\text{where}\qquad f(\gamma,w) & \coloneqq\mathbb{E}_{p}\!\left[T_{w}(Z)\right]+1-\mathbb{E}_{q_{\gamma}}\!\left[e^{T_{w}(Z)}\right].\label{eq:neural-net-param-div-obj-func}
\end{align}
The idea from here is to optimize the objective function in \eqref{eq:neural-net-param-div-obj-func}
in terms of $\gamma$ and $w$ by using various minimax optimization
algorithms \cite{Korpelevich1976,Wang2020,Lin2020,Gorbunov2022,Gao2022a,Lin2024},
for which, in some cases, there are guarantees on their convergence
to a local minimax point \cite{Jin2020} of $f(\gamma,w)$. The optimization
task involves a minimization of $f(\gamma,w)$ with respect to $\gamma$
and a maximization with respect to $w$. The minimax optimization
algorithms also require access to the gradient of $f(\gamma,w)$ and
some of them to its Hessian, and one key insight of the present paper
is that the required gradient and Hessian elements can be estimated
by means of the evolved quantum Boltzmann gradient estimator, using
the quantum circuits depicted in Figure \ref{fig:Quantum-Boltzmann-gradient-estimator}.

Thus, the solution proposed here is a practical approach to training
evolved quantum Boltzmann machines for Born-rule generative modeling.
The key aspect of it is that the objective function $f(\gamma,w)$,
its gradient, and its Hessian can be efficiently estimated by means
of quantum algorithms, and the related minimax optimization can be
carried out as a hybrid quantum--classical algorithm. Other distinguishability
measures beyond the relative entropy, including the R\'enyi relative
quasi-entropy, the chi-square divergence, the Hellinger distance,
etc.~have variational representations similar to that in \eqref{eq:DV-formula},
so that a similar approach can be applied to them. I show this explicitly
in Section \ref{sec:Alternative-dist-meas} for the R\'enyi relative
quasi-entropy.

\subsection{Paper organization}

The rest of this paper is organized as follows:
\begin{itemize}
\item Section \ref{sec:Review-EQBGE} reviews the evolved quantum Boltzmann
gradient estimator in more detail, specifically indicating the number
of quantum circuit executions that are needed for reliable estimation
of the gradient components in \eqref{eq:gradient-VQE-obj} and \eqref{eq:time-evolve-grad}.
\item Section \ref{sec:Nonconvexity-GM-EQBM} proves that the optimization
problem in \eqref{eq:min-task-rel-ent} is not convex in the parameter
vector $\gamma$, implying that finding a local minimum is the best
we can generally attempt to accomplish for this problem.
\item Section \ref{sec:Gradient-and-Hessian-DV} presents the core technical
contribution of this paper: analytical expressions for the gradient
and Hessian of the objective function $f(\gamma,w)$ in \eqref{eq:neural-net-param-div-obj-func}
and the observation that they can be estimated by means of the evolved
quantum Boltzmann gradient estimator depicted in Figure \ref{fig:Quantum-Boltzmann-gradient-estimator}.
Section \ref{sec:Gradient-and-Hessian-DV} also presents simple bounds
on the norm of the gradients and the spectral norm of the Hessians,
which can be useful for determining the number of steps required for
minimax optimization.
\item Section \ref{sec:Hybrid-quantum=002013classical-algs} appeals directly
to existing algorithms for minimax optimization in order to formulate
four different hybrid quantum--classical algorithms for optimizing
\eqref{eq:neural-net-param-div} (Algorithms \ref{alg:Extragradient}--\ref{alg:HessianFR}).
The existing algorithms for minimax optimization include extragradient
\cite{Korpelevich1976,Gorbunov2022}, two-timescale gradient descent-ascent
\cite{Lin2020,Lin2024}, follow-the-ridge \cite{Wang2020}, and HessianFR
\cite{Gao2022a}.
\item Section \ref{sec:Maintaining-concavity-via-MLFS} specializes the
development from a general neural network to a linear model in feature
space, with the main advantage of doing so being that concavity of
\eqref{eq:neural-net-param-div-obj-func} in $w$ holds for this model,
so that stronger guarantees on convergence to a local minimax point
can be made for this model.
\item Section \ref{sec:Alternative-dist-meas} extends the whole development
to other distinguishability measures beyond relative entropy, which
includes the R\'enyi relative quasi-entropies. The variational representation
of the R\'enyi relative quasi-entropies is somewhat similar to the
DV formula in \eqref{eq:DV-formula}, so that the extension is straightforward.
\item Section \ref{sec:Conclusion} concludes with a brief summary of the
findings reported here, as well as suggestions for future directions.
\end{itemize}

\section{Review of evolved quantum Boltzmann gradient estimator}

\label{sec:Review-EQBGE}In this section, I review the evolved quantum
Boltzmann gradient estimator of \cite{Patel2024,Minervini2025}. In
particular, I recall how the quantum circuits depicted in Figure \ref{fig:Quantum-Boltzmann-gradient-estimator}
can be used to estimate the first term of \eqref{eq:gradient-VQE-obj}
and the expression in \eqref{eq:time-evolve-grad}.

Suppose that $O$ is an observable of the following form:
\begin{equation}
O=\sum_{z\in\mathcal{Z}}g(z)\Lambda_{z},
\end{equation}
where $g\colon\mathcal{Z}\to\mathbb{R}$ and $\left(\Lambda_{z}\right)_{z\in\mathcal{Z}}$
is a POVM. It follows that
\begin{equation}
\left\Vert O\right\Vert \leq\left\Vert g\right\Vert \coloneqq\max_{z\in\mathcal{Z}}\left|g(z)\right|,\label{eq:observable-norm-bound}
\end{equation}
where $\left\Vert O\right\Vert \coloneqq\sup_{|\psi\rangle\neq0}\frac{\left\Vert O|\psi\rangle\right\Vert }{\left\Vert |\psi\rangle\right\Vert }$
is the spectral norm, because
\begin{equation}
-\left\Vert g\right\Vert I=-\left\Vert g\right\Vert \sum_{z\in\mathcal{Z}}\Lambda_{z}\leq\sum_{z\in\mathcal{Z}}g(z)\Lambda_{z}=O\leq\left\Vert g\right\Vert \sum_{z\in\mathcal{Z}}\Lambda_{z}=\left\Vert g\right\Vert I,
\end{equation}
which is helpful for determining the number of samples sufficient
for accurate gradient estimation.

The first algorithm estimates the first term of \eqref{eq:gradient-VQE-obj}:
\begin{lyxalgorithm}
\label{alg:EQBGE-1}The algorithm for estimating
\begin{equation}
\mu\equiv-\frac{1}{2}\left\langle \left\{ e^{iH(\phi)}Oe^{-iH(\phi)},\Phi_{\theta}(G_{j})\right\} \right\rangle _{\rho_{\theta}}
\end{equation}
proceeds according to the following steps:
\begin{enumerate}
\item Fix the desired accuracy $\varepsilon>0$ and error probability $\delta\in\left(0,1\right)$.
Set $n\in\mathbb{N}$ to satisfy $n\geq\frac{\left\Vert g\right\Vert ^{2}}{2\varepsilon^{2}}\ln\!\left(\frac{2}{\delta}\right)$.
\item For $i=1,\ldots,n$, execute the quantum circuit depicted in Figure
\ref{fig:Quantum-Boltzmann-gradient-estimator}(a). Set $X_{i}\leftarrow\left(-1\right)^{r_{i}}g(z_{i})$,
where $r_{i}\in\left\{ 0,1\right\} $ is the outcome of the $\sigma_{Z}$
measurement on the control register and $z_{i}$ is the outcome of
the measurement $\left(\Lambda_{z}\right)_{z\in\mathcal{Z}}$ on the
data register.
\item Output $\overline{X}\equiv\frac{1}{n}\sum_{i=1}^{n}X_{i}$ as an estimate
of $\mu$.
\end{enumerate}
\end{lyxalgorithm}

The second algorithm estimates the expression in \eqref{eq:time-evolve-grad}.
\begin{lyxalgorithm}
\label{alg:EQBGE-2}The algorithm for estimating
\begin{equation}
\nu\equiv-\frac{i}{2}\left\langle \left[O,\Psi_{\phi}(H_{k})\right]\right\rangle _{\omega_{\theta,\phi}}
\end{equation}
proceeds according to the following steps:
\begin{enumerate}
\item Fix the desired accuracy $\varepsilon>0$ and error probability $\delta\in\left(0,1\right)$.
Set $n\in\mathbb{N}$ to satisfy $n\geq\frac{2\left\Vert g\right\Vert ^{2}}{\varepsilon^{2}}\ln\!\left(\frac{2}{\delta}\right)$.
\item For $i=1,\ldots,n$, execute the quantum circuit depicted in Figure
\ref{fig:Quantum-Boltzmann-gradient-estimator}(b). Set $Y_{i}\leftarrow2\left(-1\right)^{s_{i}}g(z'_{i})$,
where $s_{i}\in\left\{ 0,1\right\} $ is the outcome of the $\sigma_{Z}$
measurement on the control register and $z'_{i}$ is the outcome of
the measurement $\left(\Lambda_{z}\right)_{z\in\mathcal{Z}}$ on the
data register.
\item Output $\overline{Y}\equiv\frac{1}{n}\sum_{i=1}^{n}Y_{i}$ as an estimate
of $\nu$.
\end{enumerate}
\end{lyxalgorithm}

The analysis in \cite[App.~B \& C]{Minervini2025} demonstrates that
$\overline{X}$ and $\overline{Y}$ are unbiased estimators of $\mu$
and $\nu$ because the following equalities hold for all $i$:
\begin{align}
\mathbb{E}\!\left[\left(-1\right)^{R_{i}}g(Z_{i})\right] & =\mu,\\
\mathbb{E}\!\left[2\left(-1\right)^{S_{i}}g(Z'_{i})\right] & =\nu.
\end{align}
According to the Hoeffding inequality \cite{Hoeffding1963} (see also,
e.g., \cite[Lemma~18]{Rethinasamy2023}), the following inequalities
hold:
\begin{align}
\Pr\!\left[\left|\overline{X}-\mu\right|\leq\varepsilon\right] & \geq1-\delta,\\
\Pr\!\left[\left|\overline{Y}-\nu\right|\leq\varepsilon\right] & \geq1-\delta,
\end{align}
indicating that Algorithms \ref{alg:EQBGE-1} and \ref{alg:EQBGE-2}
provide accurate estimations of $\mu$ and $\nu$ with sufficiently
many quantum circuit executions.

\section{Nonconvexity of Born-rule generative modeling with evolved quantum
Boltzmann machines}

\label{sec:Nonconvexity-GM-EQBM}In this section, I point out that
the optimization problem in \eqref{eq:min-task-rel-ent} is not convex
in the parameter vector $\gamma$ when the underlying ansatz is an
evolved quantum Boltzmann machine, as in \eqref{eq:evolved-QBM}.
In fact, we can arrive at this conclusion simply by considering a
quantum Boltzmann machine consisting of a single qubit. As a consequence
of this finding, the best we can hope for, in general, when attempting
to perform the optimization in \eqref{eq:min-task-rel-ent} is to
find a local minimum. Thus, this finding stands in distinction to
the quantum state learning problem from \cite{Coopmans2024}, where
it was proven that the optimization problem considered there is convex.

\begin{figure}
\begin{centering}
\includegraphics[width=0.8\textwidth]{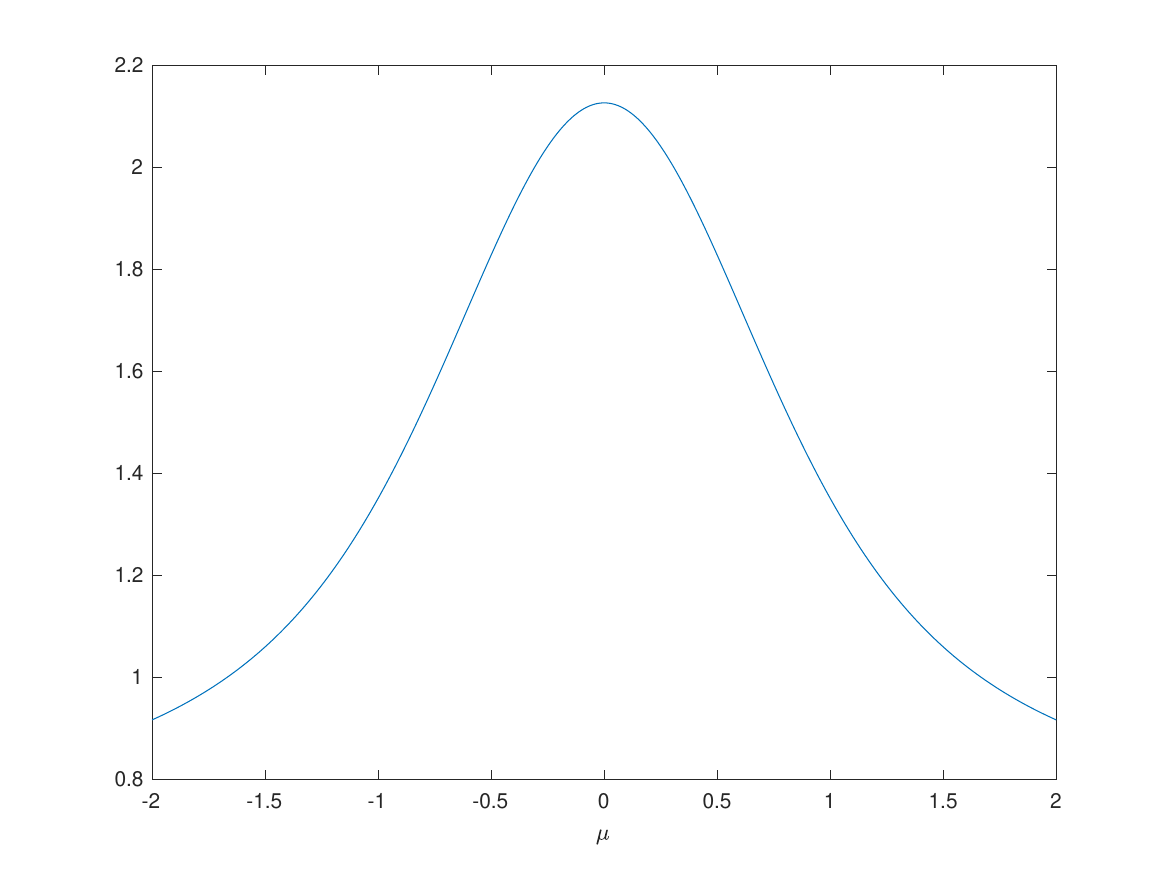}
\par\end{centering}
\caption{Plot of the function in \eqref{eq:rel-ent-not-convex-example}, demonstrating
that it is not convex in $\mu$.}\label{fig:not-convex-example}

\end{figure}

The counterexample to convexity is as follows. Let $G(\theta)=\theta_{X}\sigma_{X}+\theta_{Z}\sigma_{Z}$,
where $\sigma_{X}$ and $\sigma_{Z}$ are the standard Pauli matrices
and $\theta=\left(\theta_{X},\theta_{Z}\right)\in\mathbb{R}$. The
parameterized thermal state $\rho_{\theta}$ is then defined as in
\eqref{eq:QBM-def}. In this case, the thermal state has a closed
form as follows (see, e.g., \cite[Corollary~17]{Minervini2025a}):
\begin{equation}
\rho_{\theta}=\frac{1}{2}\left(I+r_{X}\sigma_{X}+r_{Z}\sigma_{Z}\right),
\end{equation}
where
\begin{equation}
r_{X}\coloneqq-\frac{\tanh(r)}{r}\theta_{X},\qquad r_{Z}\coloneqq-\frac{\tanh(r)}{r}\theta_{Z},\qquad r\coloneqq\sqrt{\theta_{X}^{2}+\theta_{Z}^{2}}.
\end{equation}
Measuring this state in the standard basis $\left\{ |0\rangle,|1\rangle\right\} $
then leads to the following model probability distribution:
\begin{equation}
q_{\theta}(0)=\langle0|\rho_{\theta}|0\rangle=\frac{1}{2}\left(1-\frac{\tanh(r)}{r}\theta_{Z}\right),\qquad q_{\theta}(1)=1-q_{\theta}(0).
\end{equation}
We can take the probability distribution $p$ to be $p(0)=1$ and
$p(1)=0$. The relative entropy then evaluates to
\begin{equation}
D(p\|q_{\theta})=-\ln q_{\theta}(0)=-\ln\!\left[\frac{1}{2}\left(1-\frac{\tanh(r)}{r}\theta_{Z}\right)\right].
\end{equation}
Now let us set $\theta_{X}=\mu$ and $\theta_{Z}=1$, so that the
relative entropy reduces to
\begin{equation}
-\ln\!\left[\frac{1}{2}\left(1-\frac{\tanh\!\left(\sqrt{\mu^{2}+1}\right)}{\sqrt{\mu^{2}+1}}\right)\right].\label{eq:rel-ent-not-convex-example}
\end{equation}
Figure \ref{fig:not-convex-example} plots this function, demonstrating
that it is not convex in $\mu$.

\section{Gradient and Hessian of Donsker--Varadhan objective function}

\label{sec:Gradient-and-Hessian-DV}In order to perform the optimization
in \eqref{eq:neural-net-param-div} (i.e., minimax optimization of
the Donsker--Varadhan objective function), it is helpful to determine
the gradient and Hessian of the objective function $f(\gamma,w)$
in \eqref{eq:neural-net-param-div-obj-func}. Indeed, all of the hybrid
quantum--classical algorithms delineated in Section \ref{sec:Hybrid-quantum=002013classical-algs}
make use of the gradient, and some of them also make use of the Hessian.

\subsection{Analytical expressions for gradient and Hessian and quantum algorithms
for estimating them}

\label{subsec:Analytical-expressions-grad-Hess}To this end, in this
subsection, I derive analytical expressions for the gradient and Hessian
of $f(\gamma,w)$, and I also outline how some of their elements can
be estimated by means of Algorithms \ref{alg:EQBGE-1} and \ref{alg:EQBGE-2}.

To begin with, let us first rewrite $f(\gamma,w)$ in \eqref{eq:neural-net-param-div-obj-func}
as follows:
\begin{align}
f(\gamma,w) & =\sum_{z\in\mathcal{Z}}p(z)T_{w}(z)+1-\left\langle O_{w}\right\rangle _{\omega_{\theta,\phi}},\label{eq:obj-func-rewrite}
\end{align}
where $\gamma=\left(\theta,\phi\right)$, the observable $O_{w}$
is defined as
\begin{equation}
O_{w}\coloneqq\sum_{z\in\mathcal{Z}}e^{T_{w}(z)}\Lambda_{z},
\end{equation}
and we used that
\begin{align}
\mathbb{E}_{p}\!\left[T_{w}(Z)\right] & =\sum_{z\in\mathcal{Z}}p(z)T_{w}(z),\\
\mathbb{E}_{q_{\gamma}}\!\left[e^{T_{w}(Z)}\right] & =\sum_{z\in\mathcal{Z}}q_{\gamma}(z)e^{T_{w}(z)}\label{eq:rewrite-as-obs-O-1}\\
 & =\sum_{z\in\mathcal{Z}}\Tr\!\left[\Lambda_{z}\omega_{\theta,\phi}\right]e^{T_{w}(z)}\\
 & =\Tr\!\left[O_{w}\omega_{\theta,\phi}\right]\label{eq:rewrite-as-obs-O-mid}\\
 & =\left\langle O_{w}\right\rangle _{\omega_{\theta,\phi}}.\label{eq:rewrite-as-obs-O-last}
\end{align}

With \eqref{eq:obj-func-rewrite} in hand, it follows as a direct
consequence of \eqref{eq:gradient-VQE-obj}--\eqref{eq:time-evolve-grad}
that the gradient $\nabla_{\gamma}f(\gamma,w)$ with respect to $\gamma=\left(\theta,\phi\right)$
has the following components:
\begin{align}
\frac{\partial}{\partial\theta_{j}}f(\gamma,w) & =\frac{\partial}{\partial\theta_{j}}\left(-\left\langle O_{w}\right\rangle _{\omega_{\theta,\phi}}\right)\\
 & =\frac{1}{2}\left\langle \left\{ e^{iH(\phi)}O_{w}e^{-iH(\phi)},\Phi_{\theta}(G_{j})\right\} \right\rangle _{\rho_{\theta}}-\left\langle O_{w}\right\rangle _{\omega_{\theta,\phi}}\left\langle G_{j}\right\rangle _{\rho_{\theta}}\label{eq:gradient-gamma-theta}\\
\frac{\partial}{\partial\phi_{k}}f(\gamma,w) & =\frac{\partial}{\partial\phi_{k}}\left(-\left\langle O_{w}\right\rangle _{\omega_{\theta,\phi}}\right)\\
 & =i\left\langle \left[O_{w},\Psi_{\phi}(H_{k})\right]\right\rangle _{\omega_{\theta,\phi}}.\label{eq:gradient-gamma-phi}
\end{align}
Thus, it is possible to estimate the gradient $\nabla_{\gamma}f(\gamma,w)$
by means of Algorithms \ref{alg:EQBGE-1} and \ref{alg:EQBGE-2},
with $O$ therein replaced by $O_{w}$. In addition, the number of
samples sufficient for obtaining $\varepsilon$-accurate estimates
with error probability $\leq\delta$ is given by $\mathcal{O}\!\left(\frac{\left\Vert e^{T_{w}}\right\Vert ^{2}}{\varepsilon^{2}}\ln\!\left(\frac{1}{\delta}\right)\right)$.

The gradient of $f(\gamma,w)$ with respect to the neural-network
parameter vector $w\in\mathcal{W}$ is as follows:
\begin{equation}
\nabla_{w}f(\gamma,w)=\mathbb{E}_{p}\!\left[\nabla_{w}T_{w}(Z)\right]-\mathbb{E}_{q_{\gamma}}\!\left[e^{T_{w}(Z)}\nabla_{w}T_{w}(Z)\right],\label{eq:gradient-w}
\end{equation}
and the Hessian is given by
\begin{align}
H_{ww} & \coloneqq\nabla_{w}^{2}f(\gamma,w)\label{eq:hessian-w-w-alt}\\
 & =\mathbb{E}_{p}\!\left[\nabla_{w}^{2}T_{w}(Z)\right]-\mathbb{E}_{q_{\gamma}}\!\left[e^{T_{w}(Z)}\left(\nabla_{w}^{2}T_{w}(Z)+\nabla_{w}T_{w}(Z)\left(\nabla_{w}T_{w}(Z)\right)^{T}\right)\right]\label{eq:hessian-w-w}
\end{align}
Both the gradient in \eqref{eq:gradient-w} and the Hessian in \eqref{eq:hessian-w-w}
can be estimated by sampling from $p$ and $q_{\gamma}$ and applying
the standard backpropagation algorithm.

By following a development similar to that in \eqref{eq:rewrite-as-obs-O-1}--\eqref{eq:rewrite-as-obs-O-last},
we can rewrite the components of the gradient $\nabla_{w}f(\gamma,w)$
in \eqref{eq:gradient-w} as follows:
\begin{equation}
\frac{\partial}{\partial w_{\ell}}f(\gamma,w)=\mathbb{E}_{p}\!\left[\frac{\partial}{\partial w_{\ell}}T_{w}(Z)\right]-\left\langle P_{w_{\ell}}\right\rangle _{\omega_{\theta,\phi}},\label{eq:gradient-w-DV}
\end{equation}
where the observable $P_{w_{\ell}}$ is defined as
\begin{equation}
P_{w_{\ell}}\coloneqq\sum_{z\in\mathcal{Z}}e^{T_{w}(z)}\frac{\partial}{\partial w_{\ell}}T_{w}(z)\Lambda_{z}.
\end{equation}
It then follows as a direct consequence of \eqref{eq:gradient-VQE-obj}--\eqref{eq:time-evolve-grad}
that the Hessian elements of $f(\gamma,w)$ with respect to $\gamma=\left(\theta,\phi\right)$
and $w$, denoted collectively by $H_{w\gamma}$, are as follows:
\begin{align}
\frac{\partial}{\partial\theta_{j}}\frac{\partial}{\partial w_{\ell}}f(\gamma,w) & =\frac{\partial}{\partial\theta_{j}}\left(-\left\langle P_{w_{\ell}}\right\rangle _{\omega_{\theta,\phi}}\right)\\
 & =\frac{1}{2}\left\langle \left\{ e^{iH(\phi)}P_{w_{\ell}}e^{-iH(\phi)},\Phi_{\theta}(G_{j})\right\} \right\rangle _{\rho_{\theta}}-\left\langle P_{w_{\ell}}\right\rangle _{\omega_{\theta,\phi}}\left\langle G_{j}\right\rangle _{\rho_{\theta}},\label{eq:hessian-w-gamma-1}\\
\frac{\partial}{\partial\phi_{k}}\frac{\partial}{\partial w_{\ell}}f(\gamma,w) & =\frac{\partial}{\partial\phi_{k}}\left(-\left\langle P_{w_{\ell}}\right\rangle _{\omega_{\theta,\phi}}\right)\\
 & =i\left\langle \left[P_{w_{\ell}},\Psi_{\phi}(H_{k})\right]\right\rangle _{\omega_{\theta,\phi}},\label{eq:hessian-w-gamma-2}
\end{align}
Thus, each of the terms in \eqref{eq:gradient-w-DV}, \eqref{eq:hessian-w-gamma-1},
and \eqref{eq:hessian-w-gamma-2} can again be estimated by means
of Algorithms \ref{alg:EQBGE-1} and \ref{alg:EQBGE-2}, with $O$
therein replaced by $P_{w_{\ell}}$. In addition, the number of samples
sufficient for obtaining $\varepsilon$-accurate estimates with error
probability $\leq\delta$ is given by $\mathcal{O}\!\left(\frac{\left\Vert h_{\ell}\right\Vert ^{2}}{\varepsilon^{2}}\ln\!\left(\frac{1}{\delta}\right)\right)$,
where $h_{\ell}(z)\equiv e^{T_{w}(z)}\frac{\partial}{\partial w_{\ell}}T_{w}(z)$.

\subsection{Bounds on gradient and Hessian norms}

In this subsection, I delineate bounds on the norms of the gradient
and Hessian elements, which are helpful for establishing smoothness
properties of the optimization problem and give a sense of how many
iterations of gradient-based algorithms are required when using estimates
of these quantities.
\begin{prop}[Gradient bounds]
\label{prop:gradient-bounds}The following inequalities hold:
\begin{align}
\left\Vert \nabla_{\gamma}f(\gamma,w)\right\Vert ^{2} & \leq4\left\Vert e^{T_{w}}\right\Vert ^{2}\left(\sum_{j\in\left[J\right]}\left\Vert G_{j}\right\Vert ^{2}+\sum_{k\in\left[K\right]}\left\Vert H_{k}\right\Vert ^{2}\right),\label{eq:gradient-gamma-bound}\\
\left\Vert \nabla_{w}f(\gamma,w)\right\Vert ^{2} & \leq\left(\left\Vert e^{T_{w}}\right\Vert +1\right)^{2}\sum_{\ell\in\left[L\right]}\left\Vert \frac{\partial}{\partial w_{\ell}}T_{w}\right\Vert ^{2}.
\end{align}
\end{prop}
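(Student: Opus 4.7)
The plan is to bound each partial derivative of $f$ individually using the analytical expressions derived in Section \ref{subsec:Analytical-expressions-grad-Hess}, and then aggregate the component-wise bounds into the squared-norm bounds stated in the proposition. The key ingredients I will draw on are: (i) the observable-norm bound \eqref{eq:observable-norm-bound}, which gives $\left\Vert O_{w}\right\Vert \le\left\Vert e^{T_{w}}\right\Vert$ and, for $h_{\ell}(z)\equiv e^{T_{w}(z)}\frac{\partial}{\partial w_{\ell}}T_{w}(z)$, $\left\Vert P_{w_{\ell}}\right\Vert \le\left\Vert h_{\ell}\right\Vert$; (ii) the fact that $\Phi_{\theta}$ and $\Psi_{\phi}$ are convex combinations of unitary conjugations, from which $\left\Vert \Phi_{\theta}(G_{j})\right\Vert \le\left\Vert G_{j}\right\Vert$ and $\left\Vert \Psi_{\phi}(H_{k})\right\Vert \le\left\Vert H_{k}\right\Vert$ follow by passing the spectral norm inside the respective integrals and using unitary invariance; and (iii) the elementary estimates $\left|\left\langle A\right\rangle _{\rho}\right|\le\left\Vert A\right\Vert$, $\left\Vert \left\{ A,B\right\} \right\Vert \le 2\left\Vert A\right\Vert \left\Vert B\right\Vert$, and $\left\Vert \left[A,B\right]\right\Vert \le 2\left\Vert A\right\Vert \left\Vert B\right\Vert$.

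For the bound on $\left\Vert \nabla_{\gamma}f(\gamma,w)\right\Vert ^{2}$, I start from \eqref{eq:gradient-gamma-theta}. The unitary conjugation $e^{iH(\phi)}O_{w}e^{-iH(\phi)}$ preserves the spectral norm, so the anticommutator term is bounded by $\left\Vert O_{w}\right\Vert \left\Vert \Phi_{\theta}(G_{j})\right\Vert \le\left\Vert e^{T_{w}}\right\Vert \left\Vert G_{j}\right\Vert$, and the product-of-expectations term is bounded by $\left\Vert O_{w}\right\Vert \left\Vert G_{j}\right\Vert \le\left\Vert e^{T_{w}}\right\Vert \left\Vert G_{j}\right\Vert$ as well, yielding $\left|\partial_{\theta_{j}}f\right|\le 2\left\Vert e^{T_{w}}\right\Vert \left\Vert G_{j}\right\Vert$ by the triangle inequality. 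Similarly, from \eqref{eq:gradient-gamma-phi} the commutator estimate gives $\left|\partial_{\phi_{k}}f\right|\le 2\left\Vert e^{T_{w}}\right\Vert \left\Vert H_{k}\right\Vert$. Squaring these two bounds and summing over $j\in\left[J\right]$ and $k\in\left[K\right]$ produces \eqref{eq:gradient-gamma-bound}.

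For the bound on $\left\Vert \nabla_{w}f(\gamma,w)\right\Vert ^{2}$, I use the representation \eqref{eq:gradient-w-DV} of $\partial_{w_{\ell}}f$ as the difference between $\mathbb{E}_{p}\!\left[\partial_{w_{\ell}}T_{w}(Z)\right]$ and $\left\langle P_{w_{\ell}}\right\rangle _{\omega_{\theta,\phi}}$. The first summand is at most $\left\Vert \partial_{w_{\ell}}T_{w}\right\Vert$ in absolute value, and the second is at most $\left\Vert P_{w_{\ell}}\right\Vert \le\left\Vert h_{\ell}\right\Vert \le\left\Vert e^{T_{w}}\right\Vert \left\Vert \partial_{w_{\ell}}T_{w}\right\Vert$, since the sup-norm of a pointwise product is bounded by the product of sup-norms. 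The triangle inequality then gives $\left|\partial_{w_{\ell}}f\right|\le\bigl(1+\left\Vert e^{T_{w}}\right\Vert \bigr)\left\Vert \partial_{w_{\ell}}T_{w}\right\Vert$, and squaring and summing over $\ell\in\left[L\right]$ yields the second claim.

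No step presents a real obstacle; the entire argument is bookkeeping with operator-norm inequalities. The only point worth stating carefully is the spectral-norm contractivity of $\Phi_{\theta}$ and $\Psi_{\phi}$, which follows because $p(t)$ in \eqref{eq:high-peak-tent-def} is a probability density and the integrand in $\Psi_{\phi}$ is taken against the uniform measure on $\left[0,1\right]$, so both channels are genuine convex mixtures of unitary conjugations and thus contractions in spectral norm.
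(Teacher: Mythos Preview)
Your proposal is correct and follows essentially the same route as the paper: bound each partial derivative componentwise via the analytical expressions \eqref{eq:gradient-gamma-theta}--\eqref{eq:gradient-gamma-phi} and \eqref{eq:gradient-w-DV}, using the observable-norm bound \eqref{eq:observable-norm-bound}, unitary invariance of the spectral norm, contractivity of $\Phi_{\theta}$ and $\Psi_{\phi}$ as convex mixtures of unitary conjugations, and submultiplicativity of the function sup-norm, then square and sum. The only cosmetic difference is that the paper invokes the H\"older inequality $\left|\Tr\!\left[AB\right]\right|\le\left\Vert A\right\Vert \left\Vert B\right\Vert _{1}$ explicitly where you appeal directly to $\left|\left\langle A\right\rangle _{\rho}\right|\le\left\Vert A\right\Vert$ and the (anti)commutator norm bounds.
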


\begin{proof}
Consider that
\begin{align}
 & \left|\frac{\partial}{\partial\theta_{j}}f(\gamma,w)\right|\nonumber \\
 & =\left|\frac{1}{2}\left\langle \left\{ e^{iH(\phi)}O_{w}e^{-iH(\phi)},\Phi_{\theta}(G_{j})\right\} \right\rangle _{\rho_{\theta}}-\left\langle O_{w}\right\rangle _{\omega_{\theta,\phi}}\left\langle G_{j}\right\rangle _{\rho_{\theta}}\right|\\
 & \leq\frac{1}{2}\left|\left\langle \left\{ e^{iH(\phi)}O_{w}e^{-iH(\phi)},\Phi_{\theta}(G_{j})\right\} \right\rangle _{\rho_{\theta}}\right|+\left|\left\langle O_{w}\right\rangle _{\omega_{\theta,\phi}}\left\langle G_{j}\right\rangle _{\rho_{\theta}}\right|\\
 & \leq\left\Vert e^{iH(\phi)}O_{w}e^{-iH(\phi)}\right\Vert \left\Vert \Phi_{\theta}(G_{j})\right\Vert +\left\Vert O_{w}\right\Vert \left\Vert G_{j}\right\Vert \\
 & \leq2\left\Vert O_{w}\right\Vert \left\Vert G_{j}\right\Vert \\
 & \leq2\left\Vert e^{T_{w}}\right\Vert \left\Vert G_{j}\right\Vert ,
\end{align}
where the second inequality follows from the H\"older inequality
(specifically, $\left|\Tr\!\left[AB\right]\right|\leq\left\Vert A\right\Vert \left\Vert B\right\Vert _{1}$),
the third inequality follows from convexity and unitary invariance
of the spectral norm, and the last inequality follows from \eqref{eq:observable-norm-bound}.
Similarly, consider that
\begin{align}
\left|\frac{\partial}{\partial\phi_{k}}f(\gamma,w)\right| & =\left|i\left\langle \left[O_{w},\Psi_{\phi}(H_{k})\right]\right\rangle _{\omega_{\theta,\phi}}\right|\\
 & \leq2\left\Vert O_{w}\right\Vert \left\Vert \Psi_{\phi}(H_{k})\right\Vert \\
 & \leq2\left\Vert e^{T_{w}}\right\Vert \left\Vert H_{k}\right\Vert .
\end{align}
The proof of \eqref{eq:gradient-gamma-bound} is concluded by combining
the bounds above and noting that the components of $\nabla_{\gamma}f(\gamma,w)$
are given by $\frac{\partial}{\partial\theta_{j}}f(\gamma,w)$ for
all $j\in\left[J\right]$ and $\frac{\partial}{\partial\phi_{k}}f(\gamma,w)$
for all $k\in\left[K\right]$.

Now consider that
\begin{align}
\left|\frac{\partial}{\partial w_{\ell}}f(\gamma,w)\right| & =\left|\mathbb{E}_{p}\!\left[\frac{\partial}{\partial w_{\ell}}T_{w}(Z)\right]-\left\langle \sum_{z\in\mathcal{Z}}e^{T_{w}(z)}\frac{\partial}{\partial w_{\ell}}T_{w}(z)\Lambda_{z}\right\rangle _{\omega_{\theta,\phi}}\right|\\
 & \leq\left|\mathbb{E}_{p}\!\left[\frac{\partial}{\partial w_{\ell}}T_{w}(Z)\right]\right|+\left|\left\langle \sum_{z\in\mathcal{Z}}e^{T_{w}(z)}\frac{\partial}{\partial w_{\ell}}T_{w}(z)\Lambda_{z}\right\rangle _{\omega_{\theta,\phi}}\right|\\
 & \leq\left\Vert \frac{\partial}{\partial w_{\ell}}T_{w}\right\Vert +\left\Vert \sum_{z\in\mathcal{Z}}e^{T_{w}(z)}\frac{\partial}{\partial w_{\ell}}T_{w}(z)\Lambda_{z}\right\Vert \\
 & \leq\left\Vert \frac{\partial}{\partial w_{\ell}}T_{w}\right\Vert +\left\Vert e^{T_{w}}\frac{\partial}{\partial w_{\ell}}T_{w}\right\Vert \\
 & \leq\left\Vert \frac{\partial}{\partial w_{\ell}}T_{w}\right\Vert \left(\left\Vert e^{T_{w}}\right\Vert +1\right),
\end{align}
where the last inequality follows from submultiplicavity of the function
norm defined in \eqref{eq:observable-norm-bound} (i.e., $\left\Vert f\cdot g\right\Vert \leq\left\Vert f\right\Vert \left\Vert g\right\Vert $).
\end{proof}
\begin{prop}[Hessian bounds]
\label{prop:hessian-bounds}The following inequalities hold:
\begin{align}
\left\Vert H_{ww}\right\Vert ^{2} & \leq\sum_{\ell,m\in\left[L\right]}\left(\left\Vert \frac{\partial^{2}}{\partial w_{\ell}\partial w_{m}}T_{w}\right\Vert \left(1+\left\Vert e^{T_{w}}\right\Vert \right)+\left\Vert e^{T_{w}}\right\Vert \left\Vert \frac{\partial}{\partial w_{\ell}}T_{w}\right\Vert \left\Vert \frac{\partial}{\partial w_{m}}T_{w}\right\Vert \right)^{2},\label{eq:hessian-w-w-up-bnd}\\
\left\Vert H_{w\gamma}\right\Vert ^{2} & \leq4\left(\sum_{\ell\in\left[L\right]}\left\Vert e^{T_{w}}\frac{\partial}{\partial w_{\ell}}T_{w}\right\Vert ^{2}\right)\left(\sum_{j\in\left[J\right]}\left\Vert G_{j}\right\Vert ^{2}+\sum_{k\in\left[K\right]}\left\Vert H_{k}\right\Vert ^{2}\right).\label{eq:hessian-w-gamma-bound}
\end{align}
\end{prop}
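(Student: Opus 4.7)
The plan is to bound the spectral norm of each Hessian block by its Frobenius norm, and then bound each entry using the same toolkit that proved Proposition \ref{prop:gradient-bounds}: the triangle inequality, the H\"older inequality, unitary invariance and convexity of the spectral norm, the observable bound \eqref{eq:observable-norm-bound}, and the submultiplicativity of the function norm $\|f \cdot g\| \leq \|f\|\|g\|$. Using $\|A\| \leq \|A\|_F$, both target inequalities reduce to summing squared entrywise bounds.

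For the $H_{ww}$ bound, I would start from the explicit formula \eqref{eq:hessian-w-w} and read off the $(\ell,m)$ entry as
\begin{equation}
\bigl(H_{ww}\bigr)_{\ell m} = \mathbb{E}_{p}\!\left[\tfrac{\partial^{2}}{\partial w_{\ell}\partial w_{m}}T_{w}(Z)\right] - \mathbb{E}_{q_{\gamma}}\!\left[e^{T_{w}(Z)}\left(\tfrac{\partial^{2}}{\partial w_{\ell}\partial w_{m}}T_{w}(Z) + \tfrac{\partial}{\partial w_{\ell}}T_{w}(Z)\,\tfrac{\partial}{\partial w_{m}}T_{w}(Z)\right)\right].
\end{equation}
Since $|\mathbb{E}_{\mu}[X(Z)]| \leq \|X\|$ for any probability measure $\mu$, the triangle inequality gives
\begin{equation}
\bigl|\bigl(H_{ww}\bigr)_{\ell m}\bigr| \leq \left\Vert \tfrac{\partial^{2}}{\partial w_{\ell}\partial w_{m}}T_{w}\right\Vert \bigl(1+\|e^{T_{w}}\|\bigr) + \|e^{T_{w}}\|\left\Vert \tfrac{\partial}{\partial w_{\ell}}T_{w}\right\Vert \left\Vert \tfrac{\partial}{\partial w_{m}}T_{w}\right\Vert,
\end{equation}
after invoking submultiplicativity on the cross term. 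Squaring and summing over $\ell,m \in [L]$ then produces \eqref{eq:hessian-w-w-up-bnd}.

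For the $H_{w\gamma}$ bound, I would use the analytical expressions \eqref{eq:hessian-w-gamma-1}--\eqref{eq:hessian-w-gamma-2}. Observe that these are structurally identical to the partial derivatives \eqref{eq:gradient-gamma-theta} and \eqref{eq:gradient-gamma-phi} analyzed in Proposition \ref{prop:gradient-bounds}, but with $O_{w}$ replaced by $P_{w_{\ell}}$. Repeating the chain of inequalities from that earlier proof verbatim yields
\begin{align}
\left|\tfrac{\partial^{2}}{\partial\theta_{j}\partial w_{\ell}}f(\gamma,w)\right| &\leq 2\|P_{w_{\ell}}\|\,\|G_{j}\|, & \left|\tfrac{\partial^{2}}{\partial\phi_{k}\partial w_{\ell}}f(\gamma,w)\right| &\leq 2\|P_{w_{\ell}}\|\,\|H_{k}\|.
\end{align}
The bound $\|P_{w_{\ell}}\| \leq \|e^{T_{w}}\tfrac{\partial}{\partial w_{\ell}}T_{w}\|$ follows from the same POVM argument that gives \eqref{eq:observable-norm-bound}, applied to the coefficient function $z \mapsto e^{T_{w}(z)}\tfrac{\partial}{\partial w_{\ell}}T_{w}(z)$. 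Finally, using $\|H_{w\gamma}\|^{2} \leq \|H_{w\gamma}\|_{F}^{2}$ and summing the squared entrywise bounds over $\ell \in [L]$ together with $j \in [J]$ and $k \in [K]$ factorizes into the product form stated in \eqref{eq:hessian-w-gamma-bound}.

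I do not expect any serious obstacle here: every step is a direct adaptation of arguments already used in Proposition \ref{prop:gradient-bounds}, together with the elementary bound $\|\cdot\| \leq \|\cdot\|_{F}$. The only care needed is to recognize that the summands in the $H_{w\gamma}$ bound separate cleanly into a factor depending only on $\ell$ and a factor depending only on $(j,k)$, which is what produces the neat product form on the right-hand side of \eqref{eq:hessian-w-gamma-bound}.
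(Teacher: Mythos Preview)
Your proposal is correct and follows essentially the same approach as the paper's own proof: bound each Hessian entry via the triangle inequality, H\"older, unitary invariance/convexity of the spectral norm, the POVM observable bound \eqref{eq:observable-norm-bound}, and submultiplicativity of the function norm, then pass from entrywise bounds to the spectral norm via $\|\cdot\|\le\|\cdot\|_{F}$. The paper also makes explicit the observation you note, that the $H_{w\gamma}$ entries have exactly the form of the gradient entries with $O_{w}$ replaced by $P_{w_{\ell}}$, so the Proposition~\ref{prop:gradient-bounds} argument carries over verbatim.
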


\begin{proof}
Consider from \eqref{eq:hessian-w-w} that
\begin{align}
 & \left|H_{w_{\ell},w_{m}}\right|\nonumber \\
 & =\left|\begin{array}{c}
\mathbb{E}_{p}\!\left[\frac{\partial^{2}}{\partial w_{\ell}\partial w_{m}}T_{w}(Z)\right]\\
-\mathbb{E}_{q_{\gamma}}\!\left[e^{T_{w}(Z)}\left(\frac{\partial^{2}}{\partial w_{\ell}\partial w_{m}}T_{w}(Z)+\frac{\partial}{\partial w_{\ell}}T_{w}(Z)\frac{\partial}{\partial w_{m}}T_{w}(Z)\right)\right]
\end{array}\right|\\
 & \leq\left|\mathbb{E}_{p}\!\left[\frac{\partial^{2}}{\partial w_{\ell}\partial w_{m}}T_{w}(Z)\right]\right|\nonumber \\
 & \qquad+\left|\mathbb{E}_{q_{\gamma}}\!\left[e^{T_{w}(Z)}\left(\frac{\partial^{2}}{\partial w_{\ell}\partial w_{m}}T_{w}(Z)+\frac{\partial}{\partial w_{\ell}}T_{w}(Z)\frac{\partial}{\partial w_{m}}T_{w}(Z)\right)\right]\right|\\
 & \leq\left\Vert \frac{\partial^{2}}{\partial w_{\ell}\partial w_{m}}T_{w}\right\Vert +\left\Vert e^{T_{w}}\left(\frac{\partial^{2}}{\partial w_{\ell}\partial w_{m}}T_{w}+\frac{\partial}{\partial w_{\ell}}T_{w}\frac{\partial}{\partial w_{m}}T_{w}\right)\right\Vert \\
 & \leq\left\Vert \frac{\partial^{2}}{\partial w_{\ell}\partial w_{m}}T_{w}\right\Vert \left(1+\left\Vert e^{T_{w}}\right\Vert \right)+\left\Vert e^{T_{w}}\right\Vert \left\Vert \frac{\partial}{\partial w_{\ell}}T_{w}\right\Vert \left\Vert \frac{\partial}{\partial w_{m}}T_{w}\right\Vert .
\end{align}
The first bound in \eqref{eq:hessian-w-w-up-bnd} thus follows because
the spectral norm does not exceed the Schatten two-norm (i.e., Frobenius
norm). That is, $\left\Vert A\right\Vert ^{2}\leq\left\Vert A\right\Vert _{2}^{2}\coloneqq\sum_{i,j}\left|A_{i,j}\right|^{2}$.

Defining $P_{w_{\ell}}\coloneqq\sum_{z\in\mathcal{Z}}e^{T_{w}(z)}\frac{\partial}{\partial w_{\ell}}T_{w}(z)\Lambda_{z}$,
consider that
\begin{align}
 & \left|\frac{\partial}{\partial\theta_{j}}\frac{\partial}{\partial w_{\ell}}f(\gamma,w)\right|\nonumber \\
 & =\left|\frac{1}{2}\left\langle \left\{ e^{iH(\phi)}P_{w_{\ell}}e^{-iH(\phi)},\Phi_{\theta}(G_{j})\right\} \right\rangle _{\rho_{\theta}}-\left\langle P_{w_{\ell}}\right\rangle _{\omega_{\theta,\phi}}\left\langle G_{j}\right\rangle _{\rho_{\theta}}\right|\\
 & \leq\frac{1}{2}\left|\left\langle \left\{ e^{iH(\phi)}P_{w_{\ell}}e^{-iH(\phi)},\Phi_{\theta}(G_{j})\right\} \right\rangle _{\rho_{\theta}}\right|+\left|\left\langle P_{w_{\ell}}\right\rangle _{\omega_{\theta,\phi}}\left\langle G_{j}\right\rangle _{\rho_{\theta}}\right|\\
 & \leq\left\Vert e^{iH(\phi)}P_{w_{\ell}}e^{-iH(\phi)}\right\Vert \left\Vert \Phi_{\theta}(G_{j})\right\Vert +\left\Vert P_{w_{\ell}}\right\Vert \left\Vert G_{j}\right\Vert \\
 & \leq2\left\Vert P_{w_{\ell}}\right\Vert \left\Vert G_{j}\right\Vert \\
 & \leq2\left\Vert e^{T_{w}}\frac{\partial}{\partial w_{\ell}}T_{w}\right\Vert \left\Vert G_{j}\right\Vert .
\end{align}
Similarly, consider that
\begin{align}
\left|\frac{\partial}{\partial\phi_{k}}\frac{\partial}{\partial w_{\ell}}f(\gamma,w)\right| & =\left|i\left\langle \left[P_{w},\Psi_{\phi}(H_{k})\right]\right\rangle _{\omega_{\theta,\phi}}\right|\\
 & \leq2\left\Vert P_{w_{\ell}}\right\Vert \left\Vert \Psi_{\phi}(H_{k})\right\Vert \\
 & \leq2\left\Vert e^{T_{w}}\frac{\partial}{\partial w_{\ell}}T_{w}\right\Vert \left\Vert H_{k}\right\Vert .
\end{align}
Combining these bounds and noting that the spectral norm is bounded
from above by the Schatten two-norm (i.e., Frobenius norm), we conclude
the bound in \eqref{eq:hessian-w-gamma-bound}.
\end{proof}

\section{Hybrid quantum--classical algorithms for Born-rule generative modeling}

\label{sec:Hybrid-quantum=002013classical-algs}In this section, I
delineate several hybrid quantum--classical algorithms that can be
used for minimax optimization of the objective function $f(\gamma,w)$
in \eqref{eq:neural-net-param-div-obj-func}. These are directly based
on extragradient \cite{Korpelevich1976,Gorbunov2022}, two-timescale
gradient descent-ascent \cite{Lin2020,Lin2024}, follow-the-ridge
\cite{Wang2020}, and HessianFR \cite{Gao2022a}. Extragradient and
two-timescale gradient descent-ascent are purely first-order methods,
relying only on gradients. Follow-the-ridge and HessianFR are second-order,
leveraging Hessian information in addition to gradients in order to
navigate the local minimax geometry.

An important goal for any such minimax optimization algorithm is to
converge to a local minimax point \cite[Definition~14]{Jin2020} of
the objective function in \eqref{eq:neural-net-param-div-obj-func}.
On the one hand, the first two algorithms (extragradient and two-timescale
gradient descent-ascent) lack formal convergence guarantees to local
minimax points in general non-convex--non-concave settings, even
though they may perform well empirically. The latter two algorithms
(follow-the-ridge and HessianFR) do provide theoretical guarantees
for convergence to local minimax points (see \cite[Theorem~1]{Wang2020}
and \cite[Theorem~3.2]{Gao2022a}). On the other hand, the first two
algorithms require first-order information only (gradients), while
the latter two algorithms require both first- and second-order information
(gradients and Hessians) and thus are more computationally expensive.

As just mentioned, the goal of minimax optimization algorithms is
to converge to a local minimax point \cite[Definition~14]{Jin2020}
of \eqref{eq:neural-net-param-div-obj-func}. Recall that a point
$(\gamma^{\star},w^{\star})$ is a local minimax point if (i) both
partial gradients vanish at that point; (ii) $w^{\star}$ is a strict
local maximizer of $f(\gamma^{\star},\cdot)$; and (iii) $\gamma^{\star}$
is a local minimizer of the (local) value function $\max_{w}f(\gamma,w)$
obtained by maximizing $f(\gamma,w)$ over $w$ in a small neighborhood
of $w^{\star}$. In terms of curvature, when $f$ is twice differentiable,
condition (ii) is equivalent to the $ww$-block of the Hessian being
negative definite, $H_{ww}(\gamma^{\star},w^{\star})<0$, and condition
(iii) is equivalent to positive semi-definiteness of the associated
Schur complement:
\begin{equation}
H_{\gamma\gamma}(\gamma^{\star},w^{\star})-H_{\gamma w}(\gamma^{\star},w^{\star})\,H_{ww}(\gamma^{\star},w^{\star})^{-1}\,H_{w\gamma}(\gamma^{\star},w^{\star})\geq0.
\end{equation}
These geometric conditions motivate why some algorithms below explicitly
incorporate Hessian information and why they are able to guarantee
convergence to local minimax points. More explicitly, these curvature
conditions explain why methods such as follow-the-ridge and HessianFR
explicitly incorporate the matrix product $H_{ww}^{-1}H_{w\gamma}$:
they exploit the local concavity in $w$ and the induced curvature
in the $\gamma$-direction encoded by the Schur complement.

All hybrid quantum--classical algorithms discussed below require
estimating gradients, and some also require estimating Hessian blocks,
using a quantum device. These estimated quantities are then supplied
to classical routines that perform the update steps. In particular,
for second-order methods such as follow-the-ridge and HessianFR, the
classical computer assembles the quantum-estimated Hessian blocks
and computes matrix inverses such as $H_{ww}^{-1}$ numerically. As
a result, the stability of these updates depends both on the accuracy
of the quantum estimates and on the conditioning of $H_{ww}$; in
practice, regularization (e.g., $H_{ww}+\lambda I$) may be used to
ensure robust classical inversion. The choice of algorithm involves
a trade-off between quantum sampling cost and classical computational
overhead: first-order methods require fewer quantum resources, while
second-order methods demand more quantum circuit executions and matrix
inversions but can achieve guaranteed convergence.

In what follows, I assume that sufficiently many samples have been
taken when estimating both the gradients and Hessian elements, so
that the stochastic updates closely approximate the corresponding
deterministic updates. The bounds from Section \ref{subsec:Analytical-expressions-grad-Hess}
give a sense of the sample overhead needed to get good estimates of
the gradient and Hessian, and the bounds from Propositions \ref{prop:gradient-bounds}
and \ref{prop:hessian-bounds} give a sense of the number of iterations
needed to ensure convergence of the algorithms (see, e.g., \cite[Theorem~3.8]{Gao2022a}).

In the remainder of this section, I describe how each of these four
algorithms can be instantiated in the hybrid quantum--classical setting,
highlighting their computational requirements and the form of the
updates.

\subsection{Extragradient}

One of the simplest methods for minimax optimization is the extragradient
method \cite{Korpelevich1976,Gorbunov2022}. Going beyond the standard
gradient descent-ascent approach to minimax optimization, it involves
a predictor and corrector step. 
\begin{lyxalgorithm}
\label{alg:Extragradient}Extragradient proceeds according to the
following steps:
\begin{enumerate}
\item Inputs: learning rates $\eta_{\gamma},\eta_{w}>0$ and number of iterations
$N$:
\item For $m=1,\ldots,N$
\begin{enumerate}
\item $\widetilde{\gamma}_{m}\leftarrow\gamma_{m}-\eta_{\gamma}\nabla_{\gamma}f(\gamma_{m},w_{m})$,
where $\nabla_{\gamma}f$ is given in \eqref{eq:gradient-gamma-theta}
and \eqref{eq:gradient-gamma-phi}.
\item $\widetilde{w}_{m}\leftarrow w_{m}+\eta_{w}\nabla_{w}f(\gamma_{m},w_{m})$,
where $\nabla_{w}f$ is given in \eqref{eq:gradient-w}.
\item $\gamma_{m+1}\leftarrow\gamma_{m}-\eta_{\gamma}\nabla_{\gamma}f(\widetilde{\gamma}_{m},\widetilde{w}_{m})$,
where $\nabla_{\gamma}f$ is given in \eqref{eq:gradient-gamma-theta}
and \eqref{eq:gradient-gamma-phi}.
\item $w_{m+1}\leftarrow w_{m}+\eta_{w}\nabla_{w}f(\widetilde{\gamma}_{m},\widetilde{w}_{m})$,
where $\nabla_{w}f$ is given in \eqref{eq:gradient-w}.
\end{enumerate}
\end{enumerate}
\end{lyxalgorithm}

\subsection{Two-timescale gradient descent-ascent}

Another method for minimax optimization, for which the goal is to
converge to a local minimax point, is two-timescale gradient descent-ascent
\cite{Lin2020,Lin2024}. The main idea of this approach is to choose
the step sizes for the descent update and ascent update to change
on different timescales, so that the ascent updates are much larger
/ faster than the descent updates (i.e., $\eta_{\gamma}\ll\eta_{w}$).
See \cite{Lin2024} for formal guarantees and specific methods for
choosing the step sizes $\eta_{\gamma}$ and $\eta_{w}$.
\begin{lyxalgorithm}
\label{alg:Two-timescale-gda}Two-timescale gradient descent-ascent
proceeds according to the following steps:
\begin{enumerate}
\item Inputs: learning rates $\eta_{\gamma},\eta_{w}>0$, where $\eta_{\gamma}\ll\eta_{w}$
and number of iterations $N$:
\item For $m=1,\ldots,N$
\begin{enumerate}
\item $\gamma_{m}\leftarrow\gamma_{m}-\eta_{\gamma}\nabla_{\gamma}f(\gamma_{m},w_{m})$,
where $\nabla_{\gamma}f$ is given in \eqref{eq:gradient-gamma-theta}
and \eqref{eq:gradient-gamma-phi}.
\item $w_{m}\leftarrow w_{m}+\eta_{w}\nabla_{w}f(\gamma_{m},w_{m})$, where
$\nabla_{w}f$ is given in \eqref{eq:gradient-w}.
\end{enumerate}
\end{enumerate}
\end{lyxalgorithm}

\subsection{Follow-the-ridge}

Follow-the-ridge is a second-order algorithm for minimax optimization
\cite{Wang2020}, which provably converges to a local minimax point
\cite{Jin2020} of the objective function in \eqref{eq:neural-net-param-div-obj-func}
(see \cite[Theorem~1]{Wang2020}). Thus, we can use it for optimizing
\eqref{eq:neural-net-param-div-obj-func}. See \cite[Appendix~A]{Wang2020}
for conditions on the learning rates needed to guarantee local convergence.
See also \cite[Section~4.1]{Wang2020} for variants of the basic algorithm
that improve convergence, involving preconditioning and momentum.
\begin{lyxalgorithm}
\label{alg:Follow-the-ridge}Follow-the-ridge proceeds according to
the following steps:
\begin{enumerate}
\item Inputs: learning rates $\eta_{\gamma},\eta_{w}>0$ and number of iterations
$N$:
\item For $m=1,\ldots,N$
\begin{enumerate}
\item $\gamma_{m+1}\leftarrow\gamma_{m}-\eta_{\gamma}\nabla_{\gamma}f(\gamma_{m},w_{m})$,
where $\nabla_{\gamma}f$ is given in \eqref{eq:gradient-gamma-theta}
and \eqref{eq:gradient-gamma-phi}.
\item $w_{m+1}\leftarrow w_{m}+\eta_{w}\nabla_{w}f(\gamma_{m},w_{m})+\eta_{\gamma}H_{ww}^{-1}H_{w\gamma}\nabla_{\gamma}f(\gamma_{m},w_{m})$,
where $\nabla_{w}f$ is given in \eqref{eq:gradient-w}, $H_{ww}$
in \eqref{eq:hessian-w-w}, and $H_{w\gamma}$ in \eqref{eq:hessian-w-gamma-1}
and \eqref{eq:hessian-w-gamma-2}.
\end{enumerate}
\end{enumerate}
\end{lyxalgorithm}

Follow-the-ridge incorporates the term $H_{ww}^{-1}H_{w\gamma}\nabla_{\gamma}f$,
which corrects the gradient update along the ascent variable direction,
ensuring that the updates remain aligned with the local ridge structure
of the minimax landscape.

\subsection{HessianFR}

The final algorithm that I consider is HessianFR \cite{Gao2022a},
which extends follow-the-ridge with additional Newton-like corrections
designed to accelerate convergence and improve local stability. As
shown in \cite[Theorem~3.2]{Gao2022a}, it provably converges to a
local minimax point.
\begin{lyxalgorithm}
\label{alg:HessianFR}HessianFR proceeds according to the following
steps:
\begin{enumerate}
\item Inputs: learning rates $\eta_{\gamma},\eta_{w1},\eta_{w2}>0$ and
number of iterations $N$:
\item For $m=1,\ldots,N$
\begin{enumerate}
\item $\gamma_{m+1}\leftarrow\gamma_{m}-\eta_{\gamma}\nabla_{\gamma}f(\gamma_{m},w_{m})$,
where $\nabla_{\gamma}f$ is given in \eqref{eq:gradient-gamma-theta}
and \eqref{eq:gradient-gamma-phi}.
\item $w_{m+1}\leftarrow w_{m}+u_{m}$, where
\begin{multline}
u_{m}\equiv\eta_{w1}\nabla_{w}f(\gamma_{m},w_{m})-\eta_{w2}H_{ww}^{-1}\nabla_{w}f(\gamma_{m},w_{m})\\
+\eta_{\gamma}H_{ww}^{-1}H_{w\gamma}\nabla_{\gamma}f(\gamma_{m},w_{m}),
\end{multline}
$\nabla_{w}f$ is given in \eqref{eq:gradient-w}, $H_{ww}$ in \eqref{eq:hessian-w-w},
and $H_{w\gamma}$ in \eqref{eq:hessian-w-gamma-1} and \eqref{eq:hessian-w-gamma-2}.
\end{enumerate}
\end{enumerate}
\end{lyxalgorithm}

Taken together, all four hybrid quantum--classical approaches presented
in this section span a spectrum from purely first-order updates to
fully second-order corrections, providing a flexible toolkit for optimizing
\eqref{eq:neural-net-param-div-obj-func} under different quantum
resource and accuracy constraints.

\section{Maintaining concavity via a linear model in feature space}

\label{sec:Maintaining-concavity-via-MLFS}A key advantage of the
DV objective function in \eqref{eq:DV-formula} is that it is concave
in $T$. This is a desirable feature for optimization, so that optimizing
\eqref{eq:min-max-opt-rel-ent} directly is a non-convex--concave
optimization problem, for which there are many more guarantees than
a non-convex--non-concave optimization problem (see \cite{Lin2024}
and references therein). As such, it could be desirable to maintain
concavity when parameterizing the maximization problem, and one way
to do so is to employ a linear model in feature space \cite{ShalevShwartz2014},
rather than a general neural network. The trade-off when employing
a linear model in feature space is that it may not be as expressive
as a general neural network.

In this section, I derive analytical expressions for the gradient
and Hessian when using a linear model in feature space. These expressions
can be estimated on a quantum computer using Algorithms \ref{alg:EQBGE-1}
and \ref{alg:EQBGE-2}, and the estimates can be plugged directly
into Algorithms \ref{alg:Extragradient}--\ref{alg:HessianFR} for
minimax optimization. In particular, let us pick
\begin{equation}
T_{w}(z)\coloneqq w^{T}\zeta(z),\label{eq:linear-model-feature-space}
\end{equation}
where $w\in\mathbb{R}^{L}$ and $\zeta\colon\mathcal{Z}\to\mathbb{R}^{L}$
is a feature map and can be nonlinear. The function in \eqref{eq:linear-model-feature-space}
is known as a linear model in feature space because it is linear in
the parameter vector $w$. With this choice, the objective function
$f(\gamma,w)$ in \eqref{eq:neural-net-param-div-obj-func} becomes
as follows:
\begin{equation}
f(\gamma,w)=\mathbb{E}_{p}\!\left[w^{T}\zeta(Z)\right]+1-\mathbb{E}_{q_{\gamma}}\!\left[e^{w^{T}\zeta(Z)}\right].
\end{equation}
As such, it is concave in $w$ because the first term $\mathbb{E}_{p}\!\left[w^{T}\zeta(Z)\right]$
is linear in $w$ and the last term $\mathbb{E}_{q_{\gamma}}\!\left[e^{w^{T}\zeta(Z)}\right]$
is convex in $w$, due to the exponential function being convex, so
that $-\mathbb{E}_{q_{\gamma}}\!\left[e^{w^{T}\zeta(Z)}\right]$ is
concave in $w$. With this choice, the optimization problem in \eqref{eq:neural-net-param-div}
is indeed a non-convex--concave optimization problem.

We can modify the objective function just slightly, by means of a
quadratic regularization, as follows:
\begin{equation}
\widetilde{f}(\gamma,w)\coloneqq\mathbb{E}_{p}\!\left[w^{T}\zeta(Z)\right]+1-\mathbb{E}_{q_{\gamma}}\!\left[e^{w^{T}\zeta(Z)}\right]-\frac{\lambda}{2}\left\Vert w\right\Vert ^{2},\label{eq:modified-obj-strongly-concave}
\end{equation}
where $\lambda>0$ is a regularization parameter. Doing so just slightly
changes the optimization problem while providing stronger numerical
stability guarantees for it. Indeed, the modified objective function
$\widetilde{f}(\gamma,w)$ is strongly concave in $w$, given that
$f(\gamma,w)$ is concave in $w$ \cite[Lemma~2.12]{Garrigos2024}.
The modified optimization problem is as follows:
\begin{equation}
\inf_{\gamma\in\Gamma}\sup_{w\in\mathcal{W}}\widetilde{f}(\gamma,w),\label{eq:modified-opt-strongly-concave}
\end{equation}
and the following inequalities clearly hold:
\begin{equation}
\inf_{\gamma\in\Gamma}D(p\|q_{\gamma})\geq\inf_{\gamma\in\Gamma}\sup_{w\in\mathcal{W}}f(\gamma,w)\geq\inf_{\gamma\in\Gamma}\sup_{w\in\mathcal{W}}\widetilde{f}(\gamma,w).
\end{equation}
One recovers the original minimax optimization $\inf_{\gamma\in\Gamma}\sup_{w\in\mathcal{W}}f(\gamma,w)$
simply by setting $\lambda=0$.

The gradient of $\widetilde{f}(\gamma,w)$ with respect to $\gamma=\left(\theta,\phi\right)$
is the same as given in \eqref{eq:gradient-gamma-theta} and \eqref{eq:gradient-gamma-phi},
with $O_{w}$ therein set to $O_{w}=\sum_{z\in\mathcal{Z}}e^{w^{T}\zeta(z)}\Lambda_{z}$.
In addition, the number of samples sufficient for obtaining $\varepsilon$-accurate
estimates with error probability $\leq\delta$ is given by $\mathcal{O}\!\left(\frac{\left\Vert h\right\Vert ^{2}}{\varepsilon^{2}}\ln\!\left(\frac{1}{\delta}\right)\right)$,
where $h(z)\coloneqq e^{w^{T}\zeta(z)}$.

The gradient of $\widetilde{f}(\gamma,w)$ with respect to $w$ is
as follows:
\begin{equation}
\nabla_{w}\widetilde{f}(\gamma,w)=\mathbb{E}_{p}\!\left[\zeta(Z)\right]-\mathbb{E}_{q_{\gamma}}\!\left[e^{w^{T}\zeta(Z)}\zeta(Z)\right]-\lambda w,\label{eq:gradient-w-1}
\end{equation}
and the Hessian is given by
\begin{equation}
H_{ww}\coloneqq\nabla_{w,w}\widetilde{f}(\gamma,w)=-\mathbb{E}_{q_{\gamma}}\!\left[e^{w^{T}\zeta(Z)}\zeta(Z)\zeta(Z)^{T}\right]-\lambda I.\label{eq:hessian-w-w-1}
\end{equation}
Given that the matrix $\mathbb{E}_{q_{\gamma}}\!\left[e^{w^{T}\zeta(Z)}\zeta(Z)\zeta(Z)^{T}\right]$
is positive semi-definite, the extra term $\lambda I$ with $\lambda>0$
guarantees that $H_{ww}$ is a negative definite matrix. Both the
gradient in \eqref{eq:gradient-w-1} and the Hessian in \eqref{eq:hessian-w-w-1}
can be estimated by sampling from $p$ and $q_{\gamma}$.

By following a development similar to that in \eqref{eq:rewrite-as-obs-O-1}--\eqref{eq:rewrite-as-obs-O-last},
we can rewrite the components of the gradient $\nabla_{w}\widetilde{f}(\gamma,w)$
in \eqref{eq:gradient-w-1} as follows:
\begin{equation}
\frac{\partial}{\partial w_{\ell}}\widetilde{f}(\gamma,w)=\mathbb{E}_{p}\!\left[\zeta_{\ell}(Z)\right]-\left\langle P_{w_{\ell}}\right\rangle _{\omega_{\theta,\phi}}-\lambda w_{\ell},\label{eq:LMFS-gradient-w}
\end{equation}
where the observable $P_{w_{\ell}}$ is defined as
\begin{equation}
P_{w_{\ell}}\coloneqq\sum_{z\in\mathcal{Z}}e^{w^{T}\zeta(z)}\zeta_{\ell}(z)\Lambda_{z},
\end{equation}
with $\zeta_{\ell}(z)$ the $\ell$th output of $\zeta(z)$. It then
follows as a direct consequence of \eqref{eq:gradient-VQE-obj}--\eqref{eq:time-evolve-grad}
that the Hessian elements of $\widetilde{f}(\gamma,w)$ with respect
to $\gamma=\left(\theta,\phi\right)$ and $w$, denoted collectively
by $H_{\gamma w}$, are as follows:
\begin{align}
\frac{\partial}{\partial\theta_{j}}\frac{\partial}{\partial w_{\ell}}\widetilde{f}(\gamma,w) & =\frac{\partial}{\partial\theta_{j}}\left(-\left\langle P_{w_{\ell}}\right\rangle _{\omega_{\theta,\phi}}\right)\\
 & =\frac{1}{2}\left\langle \left\{ e^{iH(\phi)}P_{w_{\ell}}e^{-iH(\phi)},\Phi_{\theta}(G_{j})\right\} \right\rangle _{\rho_{\theta}}-\left\langle P_{w_{\ell}}\right\rangle _{\omega_{\theta,\phi}}\left\langle G_{j}\right\rangle _{\rho_{\theta}},\label{eq:LMFS-hessian-w-theta}\\
\frac{\partial}{\partial\phi_{k}}\frac{\partial}{\partial w_{\ell}}\widetilde{f}(\gamma,w) & =\frac{\partial}{\partial\phi_{k}}\left(-\left\langle P_{w_{\ell}}\right\rangle _{\omega_{\theta,\phi}}\right)\\
 & =i\left\langle \left[P_{w_{\ell}},\Psi_{\phi}(H_{k})\right]\right\rangle _{\omega_{\theta,\phi}}.\label{eq:LMFS-hessian-w-phi}
\end{align}
Thus, each of the terms in \eqref{eq:LMFS-gradient-w}, \eqref{eq:LMFS-hessian-w-theta},
and \eqref{eq:LMFS-hessian-w-phi} can again be estimated by means
of Algorithms \ref{alg:EQBGE-1} and \ref{alg:EQBGE-2}, with $O$
therein replaced by $P_{w_{\ell}}$. In addition, the number of samples
sufficient for obtaining $\varepsilon$-accurate estimates with error
probability $\leq\delta$ is given by $\mathcal{O}\!\left(\frac{\left\Vert p_{\ell}\right\Vert ^{2}}{\varepsilon^{2}}\ln\!\left(\frac{1}{\delta}\right)\right)$,
where $p_{\ell}(z)\coloneqq e^{w^{T}\zeta(z)}\zeta_{\ell}(z)$.

\section{Alternative distinguishability measures}

\label{sec:Alternative-dist-meas}In this section, I note that the
main method outlined in this paper applies to other distinguishability
measures. This holds because these other distinguishability measures
have variational representations analogous to the Donsker--Varadhan
formula in \eqref{eq:DV-formula}. As such, one can extend the whole
development to these other distinguishability measures. See, e.g,
\cite[Section~2.2]{Sreekumar2022} for variational representations
of other distinguishability measures like chi-square divergence, squared
Hellinger distance, and total variation distance.

As a key example, let us consider the R\'enyi relative quasi-entropy,
defined for probability distributions $p$ and $q_{\gamma}$ and $\alpha\in\left(0,1\right)\cup\left(1,\infty\right)$
as
\begin{equation}
Q_{\alpha}(p\|q_{\gamma})\coloneqq\sum_{z\in\mathcal{\mathcal{Z}}}p(z)^{\alpha}q_{\gamma}(z)^{1-\alpha}.
\end{equation}
It is known to have the following variational representations \cite{Berta2017}:
\begin{align}
\inf_{T\in\mathcal{T}}\left\{ \alpha\mathbb{E}_{p}\!\left[e^{\frac{\alpha-1}{\alpha}T(Z)}\right]+\left(1-\alpha\right)\mathbb{E}_{q_{\gamma}}\!\left[e^{T(Z)}\right]\right\}  & \quad\text{for }\alpha\in\left(0,1\right),\\
\sup_{T\in\mathcal{T}}\left\{ \alpha\mathbb{E}_{p}\!\left[e^{\frac{\alpha-1}{\alpha}T(Z)}\right]+\left(1-\alpha\right)\mathbb{E}_{q_{\gamma}}\!\left[e^{T(Z)}\right]\right\}  & \quad\text{for }\alpha\in\left(1,\infty\right),
\end{align}
which results from simplifying the expressions in \cite[Lemma~3]{Berta2017}
to the classical case. Thus, for $\alpha\in\left(0,1\right)$, the
parameterized maximin optimization problem for Born-rule generative
modeling becomes
\begin{align}
\sup_{\gamma\in\Gamma}Q_{\alpha}(p\|q_{\gamma}) & \leq\sup_{\gamma\in\Gamma}\inf_{w\in\mathcal{W}}f_{\alpha}(\gamma,w),\label{eq:neural-net-param-div-renyi-0-1}
\end{align}
and for $\alpha>1$, it is as follows:
\begin{align}
\inf_{\gamma\in\Gamma}Q_{\alpha}(p\|q_{\gamma}) & \geq\inf_{\gamma\in\Gamma}\sup_{w\in\mathcal{W}}f_{\alpha}(\gamma,w),\label{eq:neural-net-param-div-renyi-gt-1}
\end{align}
where
\begin{equation}
f_{\alpha}(\gamma,w)\coloneqq\alpha\mathbb{E}_{p}\!\left[e^{\frac{\alpha-1}{\alpha}T_{w}(Z)}\right]+\left(1-\alpha\right)\mathbb{E}_{q_{\gamma}}\!\left[e^{T_{w}(Z)}\right].
\end{equation}
Let us note that the optimizations in \eqref{eq:neural-net-param-div-renyi-0-1}
are flipped when compared to \eqref{eq:neural-net-param-div} because
$Q_{\alpha}$ is a similarity measure for $\alpha\in(0,1)$, taking
values in $\left[0,1\right]$ with its largest value equal to one
if and only if $p=q_{\gamma}$. When $\alpha>1$, the quantity $Q_{\alpha}$
is a distinguishability measure, taking values $\geq1$ with its minimum
occurring at one, if and only if $p=q_{\gamma}$.

The gradient $\nabla_{\gamma}f_{\alpha}(\gamma,w)$ with respect to
$\gamma=\left(\theta,\phi\right)$ is precisely the same as in \eqref{eq:gradient-gamma-theta}
and \eqref{eq:gradient-gamma-phi}, with the exception that it should
be multiplied by the prefactor $1-\alpha$. The gradient $\nabla_{w}f_{\alpha}(\gamma,w)$
and Hessian $\nabla_{w}^{2}f_{\alpha}(\gamma,w)$ have the following
expressions:
\begin{multline}
\nabla_{w}f_{\alpha}(\gamma,w)=\\
\left(\alpha-1\right)\left(\mathbb{E}_{p}\!\left[e^{\frac{\alpha-1}{\alpha}T_{w}(Z)}\nabla_{w}T_{w}(Z)\right]-\mathbb{E}_{q_{\gamma}}\!\left[e^{T_{w}(Z)}\nabla_{w}T_{w}(Z)\right]\right),
\end{multline}
\begin{multline}
\nabla_{w}^{2}f_{\alpha}(\gamma,w)=\\
\left(\alpha-1\right)\mathbb{E}_{p}\!\left[e^{\frac{\alpha-1}{\alpha}T_{w}(Z)}\left(\nabla_{w}^{2}T_{w}(Z)+\left(\frac{\alpha-1}{\alpha}\right)\nabla_{w}T_{w}(Z)\left[\nabla_{w}T_{w}(Z)\right]^{T}\right)\right]\\
+\left(1-\alpha\right)\mathbb{E}_{q_{\gamma}}\!\left[e^{T_{w}(Z)}\left(\nabla_{w}^{2}T_{w}(Z)+\nabla_{w}T_{w}(Z)\left[\nabla_{w}T_{w}(Z)\right]^{T}\right)\right],
\end{multline}
and thus can be estimated by sampling from $p$ and $q_{\gamma}$
and performing the standard backpropagation algorithm. Finally, the
components of the Hessian $H_{w\gamma}$ are precisely the same as
in \eqref{eq:hessian-w-gamma-1} and \eqref{eq:hessian-w-gamma-2},
with the exception that they should be multiplied by the prefactor
$1-\alpha$.

These expressions can be directly plugged into Algorithms \ref{alg:Extragradient}--\ref{alg:HessianFR}
for conducting the minimax optimization for \eqref{eq:neural-net-param-div-renyi-gt-1},
while noting that the signs of the updates should be flipped when
conducting the maximin optimization for \eqref{eq:neural-net-param-div-renyi-0-1}.

\section{Conclusion}

\label{sec:Conclusion}In summary, I delineated a practical solution
to the problem of training evolved quantum Boltzmann machines for
Born-rule generative modeling. The solution combines the evolved quantum
Boltzmann gradient estimator of \cite{Patel2024,Minervini2025} and
the Donsker--Vardhan variational formula for the relative entropy,
leading to four different hybrid quantum--classical algorithms for
training evolved quantum Boltzmann machines. As a special case, this
provides a practical solution to the open problem, dating back to
\cite{Amin2018}, of training quantum Boltzmann machines for Born-rule
generative modeling. I also showed how the method extends to other
distinguishability measures like the R\'enyi relative quasi-entropy.

Going forward from here, there are a number of open directions to
pursue. First, following \cite{Sreekumar2022}, one could aim to provide
formal guarantees for the convergence of the relative entropy estimator.
This would involve separating the total error into three components:
(i) an approximation error, reflecting how well the neural network
approximates the optimal value in the Donsker--Varadhan formula,
(ii) an estimation error, quantifying how quickly the estimator converges
to its expected value, and (iii) an optimization error, measuring
the convergence of the minimax optimization to a local minimax point.
Second, it would be worthwhile to perform numerical simulations to
assess practical performance. Here, one could employ classical simulators
of quantum computers, or one could execute the algorithms on real
hardware. Finally, it would be interesting to investigate whether
these hybrid quantum--classical algorithms achieve improved convergence
when combined with the natural gradient method of \cite{Patel2025}.

\medskip{}

\textit{Acknowledgements}---I acknowledge insightful discussions
with Dhrumil Patel and Michele Minervini during the development of
\cite{Patel2024,Minervini2025}, and additional insightful discussions
with Ziv Goldfeld and Sreejith Sreekumar during the development of
\cite{Sreekumar2025}. I also acknowledge support from the National
Science Foundation under grant no.~2329662 and from the Cornell School
of Electrical and Computer Engineering.

\footnotesize

\bibliographystyle{alphaurl}
\bibliography{Ref}

\newcommand{\etalchar}[1]{$^{#1}$}
\begin{thebibliography}{POBRGB18}

\bibitem[AA23]{Anshu2023}
Anurag Anshu and Srinivasan Arunachalam.
\newblock A survey on the complexity of learning quantum states.
\newblock {\em Nature Reviews Physics}, 6(1):59--69, 2023.
\newblock arXiv:2305.20069.
\newblock \href {https://doi.org/10.1038/s42254-023-00662-4}
  {\path{doi:10.1038/s42254-023-00662-4}}.

\bibitem[AAR{\etalchar{+}}18]{Amin2018}
Mohammad~H. Amin, Evgeny Andriyash, Jason Rolfe, Bohdan Kulchytskyy, and Roger
  Melko.
\newblock Quantum {B}oltzmann machine.
\newblock {\em Physical Review X}, 8:021050, May 2018.
\newblock URL: \url{https://link.aps.org/doi/10.1103/PhysRevX.8.021050}, \href
  {https://doi.org/10.1103/PhysRevX.8.021050}
  {\path{doi:10.1103/PhysRevX.8.021050}}.

\bibitem[AC19]{Anschuetz2019}
Eric~R. Anschuetz and Yudong Cao.
\newblock Realizing quantum {B}oltzmann machines through eigenstate
  thermalization, 2019.
\newblock URL: \url{https://arxiv.org/abs/1903.01359}, \href
  {https://arxiv.org/abs/1903.01359} {\path{arXiv:1903.01359}}.

\bibitem[BBR{\etalchar{+}}18]{Belghazi2018}
Mohamed~Ishmael Belghazi, Aristide Baratin, Sai Rajeshwar, Sherjil Ozair,
  Yoshua Bengio, Aaron Courville, and Devon Hjelm.
\newblock Mutual information neural estimation.
\newblock In Jennifer Dy and Andreas Krause, editors, {\em Proceedings of the
  35th International Conference on Machine Learning}, volume~80 of {\em
  Proceedings of Machine Learning Research}, pages 531--540. PMLR, July 2018.
\newblock URL: \url{https://proceedings.mlr.press/v80/belghazi18a.html}.

\bibitem[BFT17]{Berta2017}
Mario Berta, Omar Fawzi, and Marco Tomamichel.
\newblock On variational expressions for quantum relative entropies.
\newblock {\em Letters in Mathematical Physics}, 107(12):2239--2265, September
  2017.
\newblock URL: \url{http://dx.doi.org/10.1007/s11005-017-0990-7}, \href
  {https://arxiv.org/abs/1512.02615v2} {\path{arXiv:1512.02615v2}}, \href
  {https://doi.org/10.1007/s11005-017-0990-7}
  {\path{doi:10.1007/s11005-017-0990-7}}.

\bibitem[BGPP{\etalchar{+}}19]{Benedetti2019}
Marcello Benedetti, Delfina Garcia-Pintos, Oscar Perdomo, Vicente
  Leyton-Ortega, Yunseong Nam, and Alejandro Perdomo-Ortiz.
\newblock A generative modeling approach for benchmarking and training shallow
  quantum circuits.
\newblock {\em npj Quantum Information}, 5(1):45, 2019.
\newblock \href {https://doi.org/10.1038/s41534-019-0157-8}
  {\path{doi:10.1038/s41534-019-0157-8}}.

\bibitem[BRGBPO17]{Benedetti2017}
Marcello Benedetti, John Realpe-G{\'o}mez, Rupak Biswas, and Alejandro
  Perdomo-Ortiz.
\newblock Quantum-assisted learning of hardware-embedded probabilistic
  graphical models.
\newblock {\em Physical Review X}, 7(4):041052, November 2017.
\newblock URL: \url{http://dx.doi.org/10.1103/PhysRevX.7.041052}, \href
  {https://doi.org/10.1103/physrevx.7.041052}
  {\path{doi:10.1103/physrevx.7.041052}}.

\bibitem[BWP{\etalchar{+}}17]{Biamonte2017}
Jacob Biamonte, Peter Wittek, Nicola Pancotti, Patrick Rebentrost, Nathan
  Wiebe, and Seth Lloyd.
\newblock Quantum machine learning.
\newblock {\em Nature}, 549:195--202, 2017.
\newblock \href {https://doi.org/10.1038/nature23474}
  {\path{doi:10.1038/nature23474}}.

\bibitem[CAB{\etalchar{+}}21]{Cerezo2021}
M.~Cerezo, Andrew Arrasmith, Ryan Babbush, Simon~C. Benjamin, Suguru Endo,
  Keisuke Fujii, Jarrod~R. McClean, Kosuke Mitarai, Xiao Yuan, Lukasz Cincio,
  and Patrick~J. Coles.
\newblock Variational quantum algorithms.
\newblock {\em Nature Reviews Physics}, 3(9):625--644, August 2021.
\newblock URL: \url{http://dx.doi.org/10.1038/s42254-021-00348-9}, \href
  {https://doi.org/10.1038/s42254-021-00348-9}
  {\path{doi:10.1038/s42254-021-00348-9}}.

\bibitem[CB24]{Coopmans2024}
Luuk Coopmans and Marcello Benedetti.
\newblock On the sample complexity of quantum {B}oltzmann machine learning.
\newblock {\em Communications Physics}, 7(1):274, 2024.
\newblock \href {https://doi.org/10.1038/s42005-024-01763-x}
  {\path{doi:10.1038/s42005-024-01763-x}}.

\bibitem[CC25]{Chang2025}
Su~Yeon Chang and M.~Cerezo.
\newblock A primer on quantum machine learning, 2025.
\newblock URL: \url{https://arxiv.org/abs/2511.15969}, \href
  {https://arxiv.org/abs/2511.15969} {\path{arXiv:2511.15969}}.

\bibitem[CGR{\etalchar{+}}24]{Caro2024}
Matthias~C. Caro, Tom Gur, Cambyse Rouz{\'e}, Daniel Stilck~Fran\c{c}a, and
  Sathyawageeswar Subramanian.
\newblock Information-theoretic generalization bounds for learning from quantum
  data.
\newblock In Shipra Agrawal and Aaron Roth, editors, {\em Proceedings of Thirty
  Seventh Conference on Learning Theory}, volume 247 of {\em Proceedings of
  Machine Learning Research}, pages 775--839. PMLR, 30 June--3 July 2024.
\newblock URL: \url{https://proceedings.mlr.press/v247/caro24a.html}.

\bibitem[Che56]{Chernoff1956}
Herman Chernoff.
\newblock Large‐sample theory: Parametric case.
\newblock {\em Annals of Mathematical Statistics}, 27(1):1--22, 1956.
\newblock \href {https://doi.org/10.1214/aoms/1177728347}
  {\path{doi:10.1214/aoms/1177728347}}.

\bibitem[CHY16]{Cheng2016}
Hao-Chung Cheng, Min-Hsiu Hsieh, and Ping-Cheng Yeh.
\newblock The learnability of unknown quantum measurements.
\newblock {\em Quantum Information and Computation}, 16(7\&8):615--656, 2016.
\newblock \href {https://doi.org/10.26421/QIC16.7-8-4}
  {\path{doi:10.26421/QIC16.7-8-4}}.

\bibitem[CKBG25]{Chen2025}
Chi-Fang Chen, Michael Kastoryano, Fernando G. S.~L. Brandão, and András
  Gilyén.
\newblock Efficient quantum thermal simulation.
\newblock {\em Nature}, 646(8085):561--566, 2025.
\newblock \href {https://doi.org/10.1038/s41586-025-09583-x}
  {\path{doi:10.1038/s41586-025-09583-x}}.

\bibitem[CMDK20]{Coyle2020}
Brian Coyle, Daniel Mills, Vincent Danos, and Elham Kashefi.
\newblock The {B}orn supremacy: quantum advantage and training of an {I}sing
  {B}orn machine.
\newblock {\em npj Quantum Information}, 6:60, 2020.
\newblock \href {https://doi.org/10.1038/s41534-020-00288-9}
  {\path{doi:10.1038/s41534-020-00288-9}}.

\bibitem[CMN{\etalchar{+}}18]{Childs2018}
Andrew~M. Childs, Dmitri Maslov, Yunseong Nam, Neil~J. Ross, and Yuan Su.
\newblock Toward the first quantum simulation with quantum speedup.
\newblock {\em Proceedings of the National Academy of Sciences},
  115(38):9456--9461, 2018.
\newblock \href {https://doi.org/10.1073/pnas.1801723115}
  {\path{doi:10.1073/pnas.1801723115}}.

\bibitem[CT06]{Cover2006}
Thomas~M. Cover and Joy~A. Thomas.
\newblock {\em Elements of Information Theory}.
\newblock Wiley, 2nd edition, 2006.
\newblock \href {https://doi.org/10.1002/047174882X}
  {\path{doi:10.1002/047174882X}}.

\bibitem[DV75]{Donsker1975}
Monroe~D. Donsker and S.~R.~Srinivasa Varadhan.
\newblock Asymptotic evaluation of certain {M}arkov process expectations for
  large time, {I}.
\newblock {\em Communications on Pure and Applied Mathematics}, 28(1):1--47,
  1975.
\newblock \href {https://doi.org/10.1002/cpa.3160280102}
  {\path{doi:10.1002/cpa.3160280102}}.

\bibitem[DZPL25]{Ding2025}
Zhiyan Ding, Yongtao Zhan, John Preskill, and Lin Lin.
\newblock End-to-end efficient quantum thermal and ground state preparation
  made simple, 2025.
\newblock URL: \url{https://arxiv.org/abs/2508.05703}, \href
  {https://arxiv.org/abs/2508.05703} {\path{arXiv:2508.05703}}.

\bibitem[GAW{\etalchar{+}}22]{Gao2022}
Xun Gao, Eric~R. Anschuetz, Sheng-Tao Wang, J.~Ignacio Cirac, and Mikhail~D.
  Lukin.
\newblock Enhancing generative models via quantum correlations.
\newblock {\em Physical Review X}, 12:021037, 2022.
\newblock \href {https://doi.org/10.1103/PhysRevX.12.021037}
  {\path{doi:10.1103/PhysRevX.12.021037}}.

\bibitem[GG24]{Garrigos2024}
Guillaume Garrigos and Robert~M. Gower.
\newblock Handbook of convergence theorems for (stochastic) gradient methods,
  2024.
\newblock URL: \url{https://arxiv.org/abs/2301.11235}, \href
  {https://arxiv.org/abs/2301.11235} {\path{arXiv:2301.11235}}.

\bibitem[GLG22]{Gorbunov2022}
Eduard Gorbunov, Nicolas Loizou, and Gauthier Gidel.
\newblock Extragradient method: {$O(1/K)$} last-iterate convergence for
  monotone variational inequalities and connections with cocoercivity.
\newblock In Gustau Camps-Valls, Francisco J.~R. Ruiz, and Isabel Valera,
  editors, {\em Proceedings of the 25th International Conference on Artificial
  Intelligence and Statistics (AISTATS)}, volume 151 of {\em Proceedings of
  Machine Learning Research}, pages 366--402. PMLR, 28--30 Mar 2022.
\newblock URL: \url{https://proceedings.mlr.press/v151/gorbunov22a.html}.

\bibitem[GLNZ22]{Gao2022a}
Yihang Gao, Huafeng Liu, Michael~K. Ng, and Mingjie Zhou.
\newblock {HessianFR}: An efficient {H}essian-based follow-the-ridge algorithm
  for minimax optimization, 2022.
\newblock URL: \url{https://arxiv.org/abs/2205.11030}, \href
  {https://arxiv.org/abs/2205.11030} {\path{arXiv:2205.11030}}.

\bibitem[GSLW19]{Gilyen2019}
Andr\'{a}s Gily\'{e}n, Yuan Su, Guang~Hao Low, and Nathan Wiebe.
\newblock Quantum singular value transformation and beyond: exponential
  improvements for quantum matrix arithmetics.
\newblock In {\em Proceedings of the 51st Annual ACM SIGACT Symposium on Theory
  of Computing}, STOC 2019, pages 193--204, New York, NY, USA, 2019.
  Association for Computing Machinery.
\newblock \href {https://doi.org/10.1145/3313276.3316366}
  {\path{doi:10.1145/3313276.3316366}}.

\bibitem[HKP21]{Huang2021}
Hsin‑Yuan Huang, Richard Kueng, and John Preskill.
\newblock Information‑theoretic bounds on quantum advantage in machine
  learning.
\newblock {\em Physical Review Letters}, 126(19):190505, 2021.
\newblock \href {https://arxiv.org/abs/2101.02464} {\path{arXiv:2101.02464}},
  \href {https://doi.org/10.1103/PhysRevLett.126.190505}
  {\path{doi:10.1103/PhysRevLett.126.190505}}.

\bibitem[Hoe63]{Hoeffding1963}
Wassily Hoeffding.
\newblock Probability inequalities for sums of bounded random variables.
\newblock {\em Journal of the American Statistical Association},
  58(301):13--30, March 1963.
\newblock \href {https://doi.org/10.2307/2282952} {\path{doi:10.2307/2282952}}.

\bibitem[JNJ20]{Jin2020}
Chi Jin, Praneeth Netrapalli, and Michael~I. Jordan.
\newblock What is local optimality in nonconvex-nonconcave minimax
  optimization?
\newblock In {\em Proceedings of the 37th International Conference on Machine
  Learning (ICML 2020)}, volume 119 of {\em Proceedings of Machine Learning
  Research}, pages 4880--4889. PMLR, 2020.
\newblock URL: \url{https://proceedings.mlr.press/v119/jin20e.html}, \href
  {https://arxiv.org/abs/1902.00618} {\path{arXiv:1902.00618}}.

\bibitem[Kap20]{Kappen2020}
H.~J. Kappen.
\newblock Learning quantum models from quantum or classical data.
\newblock {\em Journal of Physics A: Mathematical and Theoretical},
  53(21):214001, May 2020.
\newblock URL: \url{https://dx.doi.org/10.1088/1751-8121/ab7df6}, \href
  {https://doi.org/10.1088/1751-8121/ab7df6}
  {\path{doi:10.1088/1751-8121/ab7df6}}.

\bibitem[KL51]{Kullback1951}
Solomon Kullback and R.~A. Leibler.
\newblock On information and sufficiency.
\newblock {\em The Annals of Mathematical Statistics}, 22(1):79--86, 1951.
\newblock \href {https://doi.org/10.1214/aoms/1177729694}
  {\path{doi:10.1214/aoms/1177729694}}.

\bibitem[Kor76]{Korpelevich1976}
G.~M. Korpelevich.
\newblock The extragradient method for finding saddle points and other
  problems.
\newblock {\em Ekonomika i Matematicheskie Metody (Matecon)}, 12(4):747--756,
  1976.
\newblock Translated version: Matecon, 13(4):35--49 (1977).
\newblock URL: \url{https://cs.uwaterloo.ca/~y328yu/classics/extragrad.pdf}.

\bibitem[KW17]{Kieferova2017}
Maria Kieferova and Nathan Wiebe.
\newblock Tomography and generative data modeling via quantum {B}oltzmann
  training.
\newblock {\em Physical Review A}, 96(6):062327, December 2017.
\newblock \href {https://arxiv.org/abs/1612.05204} {\path{arXiv:1612.05204}},
  \href {https://doi.org/10.1103/PhysRevA.96.062327}
  {\path{doi:10.1103/PhysRevA.96.062327}}.

\bibitem[LC19]{Low2019}
Guang~Hao Low and Isaac~L. Chuang.
\newblock Hamiltonian simulation by qubitization.
\newblock {\em {Quantum}}, 3:163, July 2019.
\newblock \href {https://doi.org/10.22331/q-2019-07-12-163}
  {\path{doi:10.22331/q-2019-07-12-163}}.

\bibitem[LJJ20]{Lin2020}
Tianyi Lin, Chi Jin, and Michael~I. Jordan.
\newblock On gradient descent ascent for nonconvex-concave minimax problems.
\newblock In {\em Proceedings of the 37th International Conference on Machine
  Learning (ICML)}, volume 119 of {\em Proceedings of Machine Learning
  Research}, pages 6083--6093. PMLR, 2020.
\newblock URL: \url{https://proceedings.mlr.press/v119/lin20a.html}.

\bibitem[LJJ25]{Lin2024}
Tianyi Lin, Chi Jin, and Michael~I. Jordan.
\newblock Two-timescale gradient descent ascent algorithms for nonconvex
  minimax optimization.
\newblock {\em Journal of Machine Learning Research}, 26(11):1--45, 2025.
\newblock URL: \url{https://www.jmlr.org/papers/volume26/22-0863/22-0863.pdf}.

\bibitem[LLL{\etalchar{+}}24]{Liu2024}
Junyu Liu, Minzhao Liu, Jin-Peng Liu, Ziyu Ye, Yunfei Wang, Yuri Alexeev, Jens
  Eisert, and Liang Jiang.
\newblock Towards provably efficient quantum algorithms for large-scale
  machine-learning models.
\newblock {\em Nature Communications}, 15(1):434, 2024.
\newblock \href {https://doi.org/10.1038/s41467-023-43957-x}
  {\path{doi:10.1038/s41467-023-43957-x}}.

\bibitem[Llo96]{Lloyd1996}
Seth Lloyd.
\newblock Universal quantum simulators.
\newblock {\em Science}, 273(5278):1073--1078, 1996.
\newblock \href {https://doi.org/10.1126/science.273.5278.1073}
  {\path{doi:10.1126/science.273.5278.1073}}.

\bibitem[LR05]{Lehmann2005}
Erich~L. Lehmann and Joseph~P. Romano.
\newblock {\em Testing Statistical Hypotheses}.
\newblock Springer Texts in Statistics. Springer, 3rd edition, 2005.
\newblock \href {https://doi.org/10.1007/0-387-27605-X}
  {\path{doi:10.1007/0-387-27605-X}}.

\bibitem[LW18]{Liu2018}
Jin-Guo Liu and Lei Wang.
\newblock Differentiable learning of quantum circuit {B}orn machines.
\newblock {\em Physical Review A}, 98(6):062324, 2018.
\newblock \href {https://doi.org/10.1103/PhysRevA.98.062324}
  {\path{doi:10.1103/PhysRevA.98.062324}}.

\bibitem[MCK{\etalchar{+}}25]{Minervini2025a}
Michele Minervini, Madison Chin, Jacob Kupperman, Nana Liu, Ivy Luo, Meghan Ly,
  Soorya Rethinasamy, Kathie Wang, and Mark~M. Wilde.
\newblock Constrained free energy minimization for the design of thermal states
  and stabilizer thermodynamic systems, 2025.
\newblock URL: \url{https://arxiv.org/abs/2508.09103v2}, \href
  {https://arxiv.org/abs/2508.09103v2} {\path{arXiv:2508.09103v2}}.

\bibitem[MPW25]{Minervini2025}
Michele Minervini, Dhrumil Patel, and Mark~M. Wilde.
\newblock Evolved quantum {B}oltzmann machines, 2025.
\newblock URL: \url{https://arxiv.org/abs/2501.03367}, \href
  {https://arxiv.org/abs/2501.03367} {\path{arXiv:2501.03367}}.

\bibitem[NCT16]{Nowozin2016}
Sebastian Nowozin, Botond Cseke, and Ryota Tomioka.
\newblock f-{GAN}: Training generative neural samplers using variational
  divergence minimization.
\newblock In {\em Advances in Neural Information Processing Systems 29 (NeurIPS
  2016)}, pages 271--279, 2016.
\newblock URL:
  \url{https://proceedings.neurips.cc/paper_files/paper/2016/file/cedebb6e872f539bef8c3f919874e9d7-Paper.pdf},
  \href {https://arxiv.org/abs/1606.00709} {\path{arXiv:1606.00709}}.

\bibitem[NWJ10]{Nguyen2010}
XuanLong Nguyen, Martin~J. Wainwright, and Michael~I. Jordan.
\newblock Estimating divergence functionals and the likelihood ratio by convex
  risk minimization.
\newblock {\em IEEE Transactions on Information Theory}, 56(11):5847--5861,
  2010.
\newblock \href {https://doi.org/10.1109/TIT.2010.2068870}
  {\path{doi:10.1109/TIT.2010.2068870}}.

\bibitem[PKPW24]{Patel2024}
Dhrumil Patel, Daniel Koch, Saahil Patel, and Mark~M. Wilde.
\newblock Quantum {B}oltzmann machine learning of ground-state energies, 2024.
\newblock URL: \url{https://arxiv.org/abs/2410.12935}, \href
  {https://arxiv.org/abs/2410.12935} {\path{arXiv:2410.12935}}.

\bibitem[POBRGB18]{PerdomoOrtiz2018}
Alejandro Perdomo-Ortiz, Marcello Benedetti, John Realpe-G{\'o}mez, and Rupak
  Biswas.
\newblock Opportunities and challenges for quantum-assisted machine learning in
  near-term quantum computers.
\newblock {\em Quantum Science and Technology}, 3(3):030502, 2018.
\newblock \href {https://doi.org/10.1088/2058-9565/aab859}
  {\path{doi:10.1088/2058-9565/aab859}}.

\bibitem[PW25]{Patel2025}
Dhrumil Patel and Mark~M. Wilde.
\newblock Natural gradient and parameter estimation for quantum {B}oltzmann
  machines.
\newblock {\em Physical Review A}, 112:052421, November 2025.
\newblock URL: \url{https://link.aps.org/doi/10.1103/j8nb-by4l}, \href
  {https://doi.org/10.1103/j8nb-by4l} {\path{doi:10.1103/j8nb-by4l}}.

\bibitem[RASW23]{Rethinasamy2023}
Soorya Rethinasamy, Rochisha Agarwal, Kunal Sharma, and Mark~M. Wilde.
\newblock Estimating distinguishability measures on quantum computers.
\newblock {\em Physical Review A}, 108:012409, July 2023.
\newblock URL: \url{https://link.aps.org/doi/10.1103/PhysRevA.108.012409},
  \href {https://doi.org/10.1103/PhysRevA.108.012409}
  {\path{doi:10.1103/PhysRevA.108.012409}}.

\bibitem[RFA24]{Rouze2024}
Cambyse Rouz{\'e}, Daniel~Stilck Franca, and {\'A}lvaro~M. Alhambra.
\newblock Optimal quantum algorithm for {G}ibbs state preparation, 2024.
\newblock URL: \url{https://arxiv.org/abs/2411.04885}, \href
  {https://arxiv.org/abs/2411.04885} {\path{arXiv:2411.04885}}.

\bibitem[RLT{\etalchar{+}}24]{Rudolph2024}
Manuel~S. Rudolph, Sacha Lerch, Supanut Thanasilp, Oriel Kiss, Sofia
  Vallecorsa, Michele Grossi, and Zo{\"e} Holmes.
\newblock Trainability barriers and opportunities in quantum generative
  modeling.
\newblock {\em npj Quantum Information}, 10:116, 2024.
\newblock arXiv:2305.02881.
\newblock \href {https://doi.org/10.1038/s41534-024-00902-0}
  {\path{doi:10.1038/s41534-024-00902-0}}.

\bibitem[SB14]{ShalevShwartz2014}
Shai Shalev‑Shwartz and Shai Ben‑David.
\newblock Kernel methods.
\newblock In {\em Understanding Machine Learning: From Theory to Algorithms},
  pages 179--189. Cambridge University Press, 2014.
\newblock \href {https://doi.org/10.1017/CBO9781107298019.017}
  {\path{doi:10.1017/CBO9781107298019.017}}.

\bibitem[SG22]{Sreekumar2022}
Sreejith Sreekumar and Ziv Goldfeld.
\newblock Neural estimation of statistical divergences.
\newblock {\em Journal of Machine Learning Research}, 23(126):1--75, 2022.
\newblock URL: \url{https://www.jmlr.org/papers/volume23/21-1212/21-1212.pdf}.

\bibitem[SGW25]{Sreekumar2025}
Sreejith Sreekumar, Ziv Goldfeld, and Mark~M. Wilde.
\newblock Performance guarantees for quantum neural estimation of entropies,
  2025.
\newblock URL: \url{https://arxiv.org/abs/2511.19289}, \href
  {https://arxiv.org/abs/2511.19289} {\path{arXiv:2511.19289}}.

\bibitem[SSHE21]{Sweke2021}
Ryan Sweke, Jean-Pierre Seifert, Dominik Hangleiter, and Jens Eisert.
\newblock On the quantum versus classical learnability of discrete
  distributions.
\newblock {\em Quantum}, 5:417, 2021.
\newblock \href {https://doi.org/10.22331/q-2021-03-23-417}
  {\path{doi:10.22331/q-2021-03-23-417}}.

\bibitem[Ste51]{Stein1951}
Charles Stein.
\newblock Information and comparison of experiments.
\newblock Charles Stein papers (SC1224). Box 12, Folder 7, Department of
  Special Collections and University Archives, Stanford University Libraries,
  unpublished., 1951.
\newblock URL: \url{https://x.com/markwilde/status/1587028028521156608}.

\bibitem[WL24]{Wang2024}
Yunfei Wang and Junyu Liu.
\newblock A comprehensive review of quantum machine learning: from {NISQ} to
  fault tolerance.
\newblock {\em Reports on Progress in Physics}, 87(11):116402, 2024.
\newblock arXiv:2401.11351.
\newblock \href {https://doi.org/10.1088/1361-6633/ad7f69}
  {\path{doi:10.1088/1361-6633/ad7f69}}.

\bibitem[WW19]{Wiebe2019}
Nathan Wiebe and Leonard Wossnig.
\newblock Generative training of quantum {B}oltzmann machines with hidden
  units, 2019.
\newblock URL: \url{https://arxiv.org/abs/1905.09902}, \href
  {https://arxiv.org/abs/1905.09902} {\path{arXiv:1905.09902}}.

\bibitem[WZB20]{Wang2020}
Yuanhao Wang, Guodong Zhang, and Jimmy Ba.
\newblock On solving minimax optimization locally: A follow‑the‑ridge
  approach.
\newblock In {\em ICLR 2020 Workshop}, 2020.
\newblock URL: \url{https://iclr.cc/virtual_2020/poster_Hkx7_1rKwS.html}, \href
  {https://arxiv.org/abs/1910.07512} {\path{arXiv:1910.07512}}.

\bibitem[ZLW21]{Zoufal2021}
Christa Zoufal, Aurélien Lucchi, and Stefan Woerner.
\newblock Variational quantum {B}oltzmann machines.
\newblock {\em Quantum Machine Intelligence}, 3(1):7, 2021.
\newblock \href {https://doi.org/10.1007/s42484-020-00033-7}
  {\path{doi:10.1007/s42484-020-00033-7}}.

\end{thebibliography}

\end{document}